 \newtheorem{thm}{Theorem}[section]
 \newtheorem{cor}[thm]{Corollary}
 \newtheorem{lem}[thm]{Lemma}
 \newtheorem{prop}[thm]{Proposition}
 \theoremstyle{definition}
 \newtheorem{defn}[thm]{Definition}
 \theoremstyle{remark}
 \newtheorem{rem}[thm]{Remark}
 \newtheorem*{ex}{Example}
 \numberwithin{equation}{section}
 \newtheorem{conj}[thm]{Conjecture}
\newcommand{\T}{{\mathcal T}}
\newcommand{\sS}{{\mathcal S}}
\newcommand{\lL}{{\mathcal L}}
\newcommand{\pP}{{\mathcal P}}
\newcommand{\D}{{\mathcal D}}
\begin{document}

%
%
%
%
%
%
%
%
%

\title[Mathematical aspects of phylogenetic groves]
 {Mathematical aspects of phylogenetic groves}

\author[Mareike Fischer]{Mareike Fischer}

\address{
Center for Integrative Bioinformatics Vienna (CIBIV)\\
Max F. Perutz Laboratories\\ 
(Joint research institute of the \\
University of Vienna, Medical University of Vienna, Veterinary University of Vienna)\\
Dr. Bohr Gasse 9\\
A-1170 Vienna\\
Austria}

\email{email@mareikefischer.de}

\thanks{Financial support from the Wiener Wissenschafts-, Forschungs- und Technologiefonds (WWTF) to Arndt von Haeseler is greatly appreciated.}

\subjclass{Primary 92B05; Secondary 94C15}

\keywords{phylogenetic tree, grove, supertree, compatibility}

\date{today}


\begin{abstract} The inference of new information on the relatedness of species by phylogenetic trees based on DNA data is one of the main challenges of modern biology. But despite all technological advances, DNA sequencing is still a time-consuming and costly process. Therefore, decision criteria would be desirable to decide a priori which data might contribute new information to the supertree which is not explicitly displayed by any input tree. A new concept, so-called groves, to identify taxon sets with the potential to construct such informative supertrees was suggested by An\'e et al. in 2009. But the important conjecture that maximal groves can easily be identified in a database remained unproved and was published on the Isaac Newton Institute's list of open phylogenetic problems. In this paper, we show that the conjecture does not generally hold, but also introduce a new concept, namely 2-overlap groves, which overcomes this problem.
\end{abstract}

\maketitle
\section{\sc Introduction}
\label{introduction}
One of the main challenges of biological sciences is the reconstruction of the `Tree of Life', i.e. the phylogenetic tree displaying all living species on earth. The genetic sequence data on some clusters of species are already available in databases like GenBank or SwissProt, and there are algorithms available to reconstruct the tree of each cluster. Unfortunately, many frequently used tree inference methods like maximum parsimony or maximum likelihood are known to be NP-hard (\cite{foulds_graham_1982}, \cite{roch_2006}, \cite{chor_tuller_2006}), which is why for such a huge number of species only heuristics can be used -- but these are usually not very reliable given such amounts of data. Therefore, the Tree of Life cannot be constructed all at once. Scientists rather depend on supertree methods to combine known phylogenies on fewer taxa (\cite{bininda_2004}, \cite{gordon_1986}, \cite{baum_1992}). But even if there is no conflict amongst the input trees, not all supertrees reveal new information: for instance, if the input trees have no shared taxa, they can be combined in any possible way and therefore do not lead to new conclusions on the relatedness of the species involved. In order to avoid this problem, An\'e et al. \cite{ane_eulenstein_2009} suggested to use the concept of groves -- sets of clusters with the potential to construct informative supertrees. As this potential merely depends on certain overlap properties of the input taxon sets, no a priori knowledge on underlying phylogenies is required. However, in order for groves to be useful for practical purposes, they should be easily identifiable in databases. Regarding this question, An\'e et al. state the following conjecture:

\begin{conj}  \label{conjecture} The following equivalent properties are true.
\begin{enumerate} 
\item For any set $\mathcal{S}$ of taxon sets, the set of maximal groves in $\mathcal{S}$ is a partiton of $\mathcal{S}$. 
\item If two groves intersect, their union is a grove.
\item Two maximal groves do not intersect.
\end{enumerate}
\end{conj}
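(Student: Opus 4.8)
Because the abstract already signals that the conjecture is false, the plan is to \emph{disprove} it, and it helps first to untangle the logical relationship between the three asserted properties. Assuming (as finiteness guarantees) that every taxon set lying in a grove lies in a maximal one, property (1) forces the maximal groves to be pairwise disjoint, so (1)$\Rightarrow$(3); and (2)$\Rightarrow$(3) as well, since if two distinct maximal groves shared an element then by (2) their union would be a strictly larger grove, contradicting maximality. Hence a single configuration in which two \emph{distinct} maximal groves intersect refutes (3), and with it both (1) and (2) simultaneously. My whole effort therefore reduces to producing one such configuration.

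The mechanism I would exploit is that being a grove is \emph{not} preserved under unions of intersecting groves; equivalently, the grove property is non-monotone, so a collection can have the potential to build an informative supertree while a larger collection containing it does not. Concretely I would search for three taxon sets $A$, $B$, $C$ such that the pairs $\{A,B\}$ and $\{B,C\}$ are both groves but the triple $\{A,B,C\}$ is not. Then, within $\mathcal{S}=\{A,B,C\}$, neither $\{A,B\}$ nor $\{B,C\}$ can be enlarged (adjoining the remaining set destroys the grove property), so both are maximal, and they share the taxon set $B$ --- exactly two distinct intersecting maximal groves.

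To engineer this I would use the fact that an overlap of a single taxon behaves like a cut vertex across which no phylogenetic signal is transmitted: a single shared taxon can keep a collection connected (so it counts as a grove under the original, one-overlap notion) while contributing no genuine interlocking. Thus I would arrange $A,B,C$ so that each pair is internally interlocked enough to force a novel spanning clade, yet so that bringing in the third set introduces a one-taxon bottleneck that washes out every candidate novel clade. The two things to verify are then: (i) that $\{A,B\}$ and $\{B,C\}$ are groves --- an existence statement, settled by exhibiting suitable input trees together with a spanning novel clade; and (ii) that $\{A,B,C\}$ is \emph{not} a grove.

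The main obstacle I anticipate is precisely direction (ii): non-groveness is a ``for all choices of input trees'' statement, so it cannot be settled by one example and instead requires checking the combinatorial overlap criterion for the whole family (or arguing directly that every candidate spanning clade is blocked). Finding an example small enough for this negative check to be tractable, yet genuine, is the delicate part, and I expect the smallest working instance to hinge on one carefully placed single-taxon overlap. Once the counterexample stands, the same analysis indicates the repair: forbidding single-taxon overlaps, i.e. passing to \emph{2-overlap} groves in which taxon sets are linked only when they share at least two taxa, removes the bottleneck that breaks union-closure and should restore the partition property, which I would then establish as the positive counterpart.
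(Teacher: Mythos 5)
Your logical reduction is sound and is essentially the skeleton the paper uses: the paper proves the full equivalence of (1)--(3) (its Lemma \ref{equi}), while you prove only (1)$\Rightarrow$(3) and (2)$\Rightarrow$(3), which indeed suffices for a refutation; and your template --- find $A,B,C$ such that $\{A,B\}$ and $\{B,C\}$ are groves but $\{A,B,C\}$ is not, so that both pairs are maximal in the database $\{A,B,C\}$ and intersect in $B$ --- is exactly the shape of the paper's counterexample. The genuine gap is twofold. First, you never actually exhibit the counterexample, which is the entire mathematical content of the refutation; the proposal remains a search plan. Second, and more seriously, the recipe you give for engineering it cannot work: you require that ``each pair is internally interlocked enough to force a novel spanning clade'', i.e.\ that $\{A,B\}$ and $\{B,C\}$ be \emph{informative}. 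For a two-element collection, a resolved cross triple across the unique split $A|B$ forces $|A\cap B|\geq 2$ (part 1(c) of Lemma \ref{splitlem}, resting on Lemma \ref{ane}), and likewise $|B\cap C|\geq 2$. But then the 2-overlap graph of $\{A,B,C\}$ contains the edges $AB$ and $BC$, hence is connected, and by Theorem 1.1 of An\'e et al.\ (Theorem \ref{2overgrove1}) the triple $\{A,B,C\}$ \emph{is} a grove --- precisely what you need to fail. So no example of the kind you describe exists.

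The counterexample must instead exploit the opposite mechanism, namely the permissive first clause of Definition \ref{grove}: a collection is a grove also when no cross triples exist at all. The paper takes $A=\{1,2,3\}$, $B=\{1\}$, $C=\{1,4,5\}$: each pair $\{A,B\}$ and $\{B,C\}$ is a grove \emph{vacuously}, because the singleton $B$ is contained in the other set, so the unique split has no cross triple; yet $\{A,B,C\}$ has cross triples (e.g.\ $\{2,3,4\}$) none of which can be resolved, since every pairwise overlap is the single taxon $1$ (Lemma \ref{ane}, or Proposition 5.1(1) of An\'e et al.). Your intuition that a single-taxon bottleneck is what destroys the union is correct, but the missing realization is that the two small groves must be degenerate (cross-triple-free) rather than ``interlocked''; with pairs, informativeness and failure of the union are mutually exclusive. (If you insist on informative groves, you must pass to larger collections, as the paper does in Proposition \ref{counterex2} with three-element groves whose shared element $\{4\}$ is again attached only vacuously.) Your closing remark that 2-overlap groves repair the failure does match the paper's positive result.
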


Obiously, property (1) would reduce the search for maximal groves at least to a search on all possible partitions of the set of taxon sets under investigation, even if this might still be hard. Actually, the above conjecture raised a lot of attention. It was published both on the Isaac Newton Institute's list of open phylogenetic problems in 2007 (see http://www.newton.ac.uk/\\programmes/PLG/conj.pdf) as well as on the `Penny Ante' list of the Annual New Zealand Phylogenetics Meeting in Kaikoura in 2009 (see http://\\www.math.canterbury.ac.nz/bio/events/kaikoura09/penny.shtml). 

In this paper, we first show that the concept of groves can be simplified by introducing {\em tripartition groves}, and then we prove that unfortunately the conjecture is not in general true. We show this by presenting an explicit counterexample to property (2) of Conjecture \ref{conjecture}. We also prove that the conjecture even fails when making the definition of groves more restrictive to enforce informativeness. Despite these negative results, we also characterize a new concept of groves, namely {\em 2-overlap groves}, which guarantees property (2), and hence the whole conjecture, to hold. 

\section{Preliminaries} \label{preliminaries}
The main idea of the grove concept is to decide in advance, i.e. before even constructing any phylogeny, if a set of various taxon sets has the potential to deliver new information when being combined into one common supertree. In the context of rooted phylogenies, `new information' of a supertree refers to resolving at least one triple of taxa which is not resolved by any of the input trees, as triples are the smallest informative unit in the rooted setting. Such triples which get resolved by a supertree but not by any of the input phylogenies are called {\em resolved cross triples}. In order to define groves explicitly, we therefore need some formal definitions of a phylogeny and of (resolved) cross triples.

Recall that a {\it rooted binary phylogenetic $X$-tree} is a tree $\T =(V(\T),E(\T))$, with vertex set $V(\T)$ and edge set $E(\T)$, on a leaf (taxon) set $X=\{1,\ldots,n\} \subset V(\T)$ with only vertices of degree 1 (leaves) or 3 (internal vertices) and one vertex of degree 2, which is called {\it root}. In this paper, when there is no ambiguity we often just write `tree' or `phylogeny' when referring to a rooted binary phylogenetic $X$-tree. \\
A {\it topology assignment} $\pP$ on a set $\sS$ of taxon sets is a set of trees such that a) for each taxon set in $\sS$ there is exactly one tree in $\pP$ and b) all these trees are compatible, i.e. can be displayed by one common tree, which is then called {\it supertree}. 

We are now in a position to define cross triples. In the following, we denote by $\lL(\sS)$ the set of all taxa of a set of taxon sets $\sS$, i.e. $\lL(\sS)=\{x: \exists S \in \sS: x \in S\}$.

\begin{defn}[Cross triple] Let $\sS$ be a set of taxon sets $X_1,\ldots, X_m$ and let $\pi = S_1| \ldots | S_k$ be a partition of $\sS$, i.e. $\bigcup\limits_{i =1}^k  S_i = \sS$ and $S_i \cap S_j = \emptyset$ for all $i \neq j$. Then, a {\it cross triple of $\sS$ with respect to $\pi$} is a set of three taxa $\{ x,y,z\}\subset \lL(\sS)$ such that no $S_i$ contains all three of them. 
\end{defn}

Recall that for three taxa $x, y, z$ there are exactly three possible rooted binary phylogenetic tree topologies, namely $((x,y),z)$, $((x,z),y)$ and $((y,z),$ \\$ x)$ as shown in Figure \ref{triples}. A topology assignment on a set $\sS$ of taxon sets might lead to various possible supertrees, and these trees need not agree on how three taxa $x, y, z$ are related. In order to formally characterize this, we now define what it means if a cross triple is resolved.

 \begin{figure}[ht]      \centering\vspace{0.5cm} 
    \includegraphics[width=11cm]{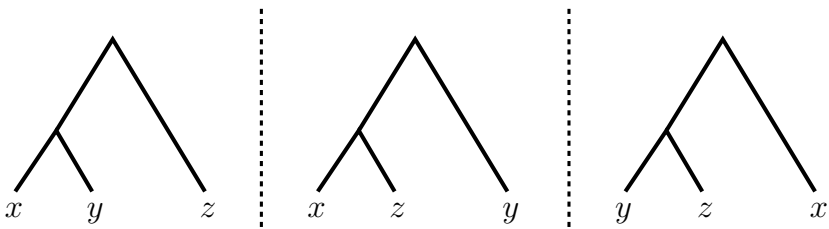} 
    \caption{Three taxa $x, y, z$ can be resolved in three different ways, which means three different rooted tree topologies are possible. }\label{triples}
  \end{figure}

\begin{defn}[Resolved cross triple] Let $\sS=X_1,\ldots, X_m$ be a set of taxon sets and let $\pi = S_1| \ldots | S_k$ be a partition of $\sS$. Let  $\{ x,y,z\} $ be a cross triple of $\sS$ with respect to $\pi$. $\{ x,y,z\} $ is called {\it resolved} if there is a topology assignment on $\sS$ such that all possible supertrees of this assignment display the same of the three possible rooted trees on $\{ x,y,z\} $.
\end{defn}

\begin{ex} \mbox{}\begin{enumerate} \item The set $\sS=\left\{ \left\{ v,w,x\right\}, \left\{ x,y,z\right\} \right\}$ does not have any resolved cross triple with respect to $\pi =\left\{ v,w,x\right\} | \left\{ x,y,z\right\} $, because no matter which topology assignment is chosen, the two input trees only intersect in only one taxon, namely $x$. Therefore, they can be combined in all possible ways such that no cross triple, like e.g. $\left\{v,w,y \right\}$, gets resolved (this is formally shown in \cite[Lemma 4.1]{ane_eulenstein_2009} as cited in Lemma \ref{ane}). \item For set $\tilde{\sS}=\left\{ \left\{ w,x,y \right\}, \left\{ x,y,z\right\} \right\}$ and partition $\tilde{\pi} =\left\{ w,x,y\right\} | \left\{ x,y,z\right\} $ there is a topology assignment $\pP$, namely $\pP=\{T_1,T_2\}$ as shown in Figure \ref{resolved}, such that all cross triples, like e.g. $\left\{w,y,z \right\}$, are resolved. In this case this is due to the fact that there is only one supertree. \end{enumerate}
\end{ex}

 \begin{figure}[ht]      \centering\vspace{0.5cm} 
    \includegraphics[width=11cm]{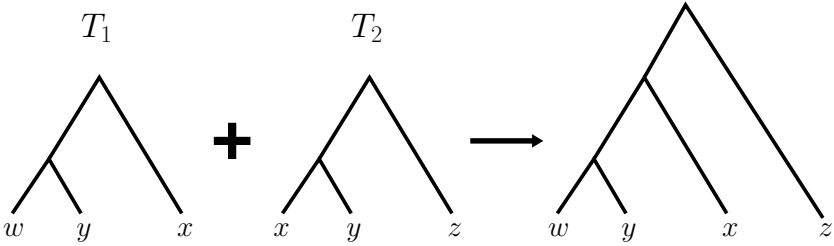} 
    \caption{For $\tilde{\sS}=\left\{ \left\{ w,x,y \right\}, \left\{ x,y,z\right\} \right\}$, the depicted topology assignment $\pP:=\{T_1,T_2\}$ leads to a unique supertree. Therefore, all cross triples with respect to the only possible partition of $\tilde{\sS}$, namely $\tilde{\pi }=\left\{ w,x,y\right\} | \left\{ x,y,z\right\}$, are resolved. }\label{resolved}
  \end{figure}

It is known that in the rooted case, either all possible supertrees of a topology assignment on a set $\sS$ of taxon sets agree on how a certain triple $\left\{x,y,z \right\} \subset \lL(\sS)$ should be resolved, or all three possible trees shown in Figure \ref{triples} are displayed in the set of supertrees \cite[Prop. 9.1]{bryant_steel_1995}. However, if the latter scenario happens for all possible cross triples, there is no information in the supertrees that is not already inherent in the topology assignment itself. We will now formalize this idea.
 
\begin{defn}[Informative topology assignment] Let $\sS$ be a set of taxon sets and $\pi$ a partition of $\sS$ such that there exists a cross triple of $\sS$ with respect to $\pi$. Then, a topology assignment $\mathcal{P}$ on $S$ is called {\it informative with respect to $\pi$} if some cross triple of $\sS$ with respect to $\pi$ is resolved by $\mathcal{P}$.
\end{defn}

\begin{ex}[continued] In the above example, for the set $\sS=\left\{ \left\{ v,w,x\right\}, \right. $ \\Ê$\left.\left\{ x,y,z\right\} \right\}$ there is no informative topology assignment. However, the set  $\tilde{\sS}=\left\{ \left\{ w,x,y \right\}, \left\{ x,y,z\right\} \right\}$ has an informative topology assignment with respect to partition $\tilde{\pi} =\left\{ w,x,y\right\} | \left\{ x,y,z\right\} $, namely the one shown in Figure \ref{resolved}. 
\end{ex}

As stated above, the idea now is to decide without considering specific input trees if a set of taxon sets has the potential to resolve cross triples. An\'e et al. showed in \cite{ane_eulenstein_2009} that this can be done with their grove concept, because the potential to resolve cross triples mainly depends on certain overlap properties of the underlying taxon sets. We now define groves as in \cite{ane_eulenstein_2009}.

\begin{defn}[Grove] \label{grove} A {\it grove} is a set $\sS$ of taxon sets such that for each possible partition $\pi$ of $\sS$ it holds either that
\begin{itemize}
\item there exists no cross triple of $\sS$ with respect to $\pi$, or
\item there exists an informative topology assignment on $\sS$ w.r.t. $\pi$. 
\end{itemize}
\end{defn}

\begin{ex}[continued] In the previous example, the set $\sS=\left\{ \left\{ v,w,x\right\}, \right. $ \\Ê$\left.\left\{ x,y,z\right\} \right\}$ is not a grove, because there is a partition of $\sS$, namely $\pi =\left\{ v,w,x\right\} | \left\{ x,y,z\right\} $, which has some cross triples but no resolved one for any topology assignment. On the other hand, the set  $\tilde{\sS}=\left\{ \left\{ w,x,y \right\}, \left\{ x,y,z\right\} \right\}$ is a grove because there is only one possible partition of  $\tilde{\sS}$, namely $\tilde{\pi} =\left\{ w,x,y\right\} | \left\{ x,y,z\right\} $, with respect to which there is an informative topology assignment as shown in Figure \ref{resolved}. 
\end{ex}

So groves have the property that if there is a cross triple with respect to a partition, at least one of these cross triples has to be resolved by some topology assignment. This implies that the supertree(s) on all taxon sets in a grove $\sS$ reveal some new information which is not present in any of the input phylogenies. But it has to be noted that the authors of \cite{ane_eulenstein_2009} by the first property in Definition \ref{grove} explicitly allow for the situation where there is no cross triple for any partition of $\sS$. This occurs, for instance, if $\sS$ contains one taxon set in which all taxa of $\sS$ are present. In this case, the supertree is always identical to the tree induced by this taxon set and therefore never reveals new information. The authors state that they include this case for biological reasons, which are not further explained. However, this means that while groves may have the potential to reveal new information, they are not guaranteed to do so. In Section \ref{informativegroves} we define informative and strictly informative groves in order to overcome this problem.

\par\vspace{0.3cm}

Next, in order to understand Conjecture \ref{conjecture}, we need to introduce a formal concept of maximality.

\begin{defn}[Maximal grove] \label{max} A grove $\sS$ in a database $\mathcal{D}$ of taxon sets  is called {\it maximal with respect to $\mathcal{D}$}, or {\it maximal} for short, if there is no taxon set $X$ in $\mathcal{D}$ such that $\sS \cup X$ is also a grove.
\end{defn}

\section{Results}
\subsection{Characterizing groves}

Our main goal in this section is to simplify the concept of groves by introducing tripartition groves. Moreover, we state some general properties of splits, partitions and groves. Our first two lemmas formalize the concept of cross triples and resolved cross triples for so-called splits. Recall that a split $\sigma=S_1|S_2$ of a set of taxon sets $\sS$ is a bipartition of $\sS$, i.e. a partition of $\sS$ into two disjoint parts.

\begin{lem} \label{crosstriples} Let $\sS$ be a set of taxon sets, let $\pi=S_1|\ldots|S_k$ be a partition of $\sS$ and let $\sigma=\hat{S}_1|\hat{S}_2$ be a split of $\sS$. \begin{enumerate} \item Assume there is a cross triple of $\sS$ with respect to $\pi$. Then, there are $i, j \in \{1,\ldots,k\}$, $i\neq j$, such that there are two taxa $s_i$, $s_j$ fulfilling the following properties: \begin{itemize} \item $s_i \in \lL(S_i)$, $s_i \notin \lL(S_j)$, and \item $s_j \in \lL(S_j)$, $s_j \notin \lL(S_i)$.\end{itemize}
\item Assume $|\lL(\sS)|\geq 3$ and assume there is no cross triple of $\sS$ with respect to $\sigma$. Then, either $\lL(\hat{S}_1)\subseteq \lL(\hat{S}_2)$ or $\lL(\hat{S}_2)\subseteq \lL(\hat{S}_1)$, i.e. either all taxa of $\hat{S}_1$ also lie in $\hat{S}_2$ or vice versa. \end{enumerate}
\end{lem}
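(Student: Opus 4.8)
The plan is to prove both parts by simply unwinding the definition of a cross triple: recall that $\{x,y,z\}$ is a cross triple with respect to a partition exactly when there is \emph{no} part whose taxon set contains all three, i.e.\ no $S_i$ with $\{x,y,z\}\subseteq\lL(S_i)$.

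For part (1), I would first note that the desired conclusion is equivalent to saying that two of the parts are \emph{incomparable} under inclusion of their taxon sets: the taxon $s_i$ witnesses $\lL(S_i)\not\subseteq\lL(S_j)$ and the taxon $s_j$ witnesses $\lL(S_j)\not\subseteq\lL(S_i)$. I would therefore prove the contrapositive: if no two parts are incomparable, then no cross triple exists. Indeed, if every pair of parts is comparable, then the finite family $\lL(S_1),\ldots,\lL(S_k)$ is a chain under inclusion and hence has a largest member $\lL(S_m)$. Since $\sS=\bigcup_{i=1}^k S_i$ gives $\lL(\sS)=\bigcup_{i=1}^k\lL(S_i)=\lL(S_m)$, every triple $\{x,y,z\}\subseteq\lL(\sS)$ already lies inside the single part $S_m$, so no triple can be a cross triple. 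Contraposing yields the claim; note that the argument needs no separate hypothesis on $|\lL(\sS)|$, since the mere existence of a cross triple already forces three distinct taxa.

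For part (2), I would argue by contradiction. Suppose that neither $\lL(\hat{S}_1)\subseteq\lL(\hat{S}_2)$ nor $\lL(\hat{S}_2)\subseteq\lL(\hat{S}_1)$ holds. Then I can pick a taxon $a\in\lL(\hat{S}_1)\setminus\lL(\hat{S}_2)$ and a taxon $b\in\lL(\hat{S}_2)\setminus\lL(\hat{S}_1)$, which are necessarily distinct. The key observation is that any triple of the form $\{a,b,c\}$ is automatically a cross triple with respect to $\sigma$: the part $\hat{S}_1$ cannot contain all three because $b\notin\lL(\hat{S}_1)$, and $\hat{S}_2$ cannot because $a\notin\lL(\hat{S}_2)$. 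Since $|\lL(\sS)|\geq 3$, there is a third taxon $c\in\lL(\sS)$ distinct from $a$ and $b$, which produces a genuine cross triple and contradicts the hypothesis that there is none.

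Both arguments are essentially definition-chasing, so there is no deep obstacle here. The two points that require care are: in part (1), recognizing that ``a pair of taxa separating two parts in both directions'' is precisely incomparability of the associated taxon sets, which is what makes the clean chain argument available; and in part (2), observing that the hypothesis $|\lL(\sS)|\geq 3$ is used exactly to supply the third taxon $c$ that upgrades the separating pair $a,b$ into a three-element cross triple.
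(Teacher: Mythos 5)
Your proof is correct, and it splits into two cases relative to the paper: for part (2) you argue exactly as the paper does (pick $a\in\lL(\hat{S}_1)\setminus\lL(\hat{S}_2)$, $b\in\lL(\hat{S}_2)\setminus\lL(\hat{S}_1)$, use $|\lL(\sS)|\geq 3$ to supply a third taxon, and observe that the resulting triple is a cross triple), but for part (1) you take a genuinely different route. The paper proves part (1) directly and constructively: starting from the given cross triple $\{x,y,z\}$, it chooses (after a without-loss-of-generality relabelling) a part $S_x$ containing $x$ but not $y$, then distinguishes two cases according to whether some part contains $y$ but not $x$, and in the remaining case uses the third taxon $z$ and the cross-triple property to manufacture the witnesses; in particular the paper's witness taxa $s_i,s_j$ always come from the triple $\{x,y,z\}$ itself. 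You instead pass to the contrapositive: if every pair of parts is comparable under inclusion of taxon sets, then the finite family $\lL(S_1),\ldots,\lL(S_k)$ has a largest member $\lL(S_m)=\lL(\sS)$, so every triple sits inside $S_m$ and no cross triple can exist. Your argument is shorter, avoids the case analysis, and exposes a clean structural dichotomy (either two parts are incomparable, or one part's taxon set swallows all taxa); what it gives up is the constructive localization of the witnesses inside the given cross triple, which the paper's version provides -- though nothing downstream in the paper (e.g.\ the proof of the subsequent lemma on splits) actually needs that extra information, so the loss is harmless.
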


\begin{proof} $\mbox{}$ \begin{enumerate} \item Let $\{x,y,z\}$ be a cross triple of $\sS$ with respect to $\pi$. By definition of cross triples, $x,y,z$ are not all contained together in any $S_m$, $m=1,\ldots,k$. Without loss of generality, choose $S_x \in \{S_1,\ldots,S_k\}$ such that $x\in \lL(S_x)$ and $y \notin \lL(S_x)$. However, as $y \in \sS$, there is an $S_y\in \{S_1,\ldots,S_k\}$ such that $y \in \lL(S_y)$. If possible, choose $S_y$ such that $x \notin \lL(S_y)$. Then, let $s_i:=x$, $s_j:=y$ and $S_i:=S_x$, $S_j:=S_y$. Else, if all sets which contain $y$ also contain $x$, choose a set $S_z$ such that $z \in \lL(S_z)$. Now if $y$ was contained in $S_z$, $S_z$ would also contain $x$, which contradicts the cross triple assumption. For the same reason, $z \notin \lL(S_y)$. So in this case, $y \in \lL(S_y)$, $y \notin \lL(S_z)$, $z \in \lL(S_z)$, $z \notin \lL(S_y)$. Then, let $s_i:=y$, $s_j:=z$ and $S_i:=S_y$, $S_j:=S_z$. This completes the proof. 
\item Assume there is no cross triple of $\sS$ with respect to $\sigma$. If not all taxa of $\hat{S}_1$ are contained in $\hat{S}_2$, this implies that there is a taxon $s_1$ such that $s_1 \in \hat{S}_1$ and $s_1 \notin \hat{S}_2$. If additionally not all taxa of $\hat{S}_2$ are contained in $\hat{S}_1$, this implies that there is a taxon $s_2$ such that $s_2 \in \hat{S}_2$ and $s_2 \notin \hat{S}_1$. Now as $|\lL(\sS)|\geq 3$, there is a $z \in \sS$ such that $z \neq s_1,s_2$. Then, by definition the set $\{s_1,s_2,z\}$ is a cross triple of $\sS$ with respect to $\sigma$. This contradicts the assumption and thus completes the proof. \qedhere
\end{enumerate}
\end{proof}

Next we recall a result by An\'e et al. in order to prove the following lemma.
\begin{lem}[Lemma 4.1 of \cite{ane_eulenstein_2009}] \label{ane} Let $X$ and $X'$ be two taxon sets that share at most one taxon. Then, any trees on $X$ and $X'$, respectively, are compatible and any cross triple of $\sS:= X\cup X'$ with respect to $\pi:= X|X'$ remains unresolved.
\end{lem}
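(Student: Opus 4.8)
The plan is to establish the two assertions separately: first that any tree $T$ on $X$ and any tree $T'$ on $X'$ are compatible, and then that no cross triple of $\sS$ with respect to $\pi=X|X'$ can be resolved. For the second assertion I would exploit the dichotomy of \cite[Prop.~9.1]{bryant_steel_1995} quoted above: since a triple is either resolved uniformly by all supertrees or else all three of its topologies occur among them, to prove that $\{x,y,z\}$ is unresolved it suffices to produce, for an \emph{arbitrary} topology assignment $\{T,T'\}$, two supertrees whose induced trees on $\{x,y,z\}$ differ. As the assignment is arbitrary, this rules out a resolving assignment and hence shows the triple stays unresolved.

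For compatibility I would give an explicit supertree. If $X\cap X'=\emptyset$, I simply attach the roots of $T$ and $T'$ to a common new root; restricting to $X$ (resp. $X'$) deletes the other tree and suppresses the resulting degree-two (or spurious degree-one) vertices, recovering $T$ (resp. $T'$). If $X\cap X'=\{w\}$, I would build the supertree along the lineage of $w$: reading $T$ from $w$ up to its root yields a sequence of pendant subtrees $A_1,\dots,A_k$ branching off the $w$-path, and $T'$ yields $B_1,\dots,B_m$ likewise; I then stack these branchings along a single path from $w$ to a new root, first the $A_i$ and then the $B_j$. Since the leaf sets of the $A_i$ and $B_j$ are disjoint (the two trees meet only in $w$), deleting the $B$-leaves collapses the upper part back to $T$ and deleting the $A$-leaves collapses the lower part back to $T'$, so both trees are displayed.

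For the unresolvedness I fix $\{T,T'\}$ and split into cases according to how $\{x,y,z\}$ meets the at most one shared taxon. If none of $x,y,z$ lies in $X\cap X'$, then each lies in exactly one block and, being a cross triple, the split is $2$--$1$, say $x,y\in X$ and $z\in X'$; I then graft the whole of $T'$ onto different edges of $T$. Attaching it above the root of $T$ forces $z$ outside $\{x,y\}$ and yields $((x,y),z)$, whereas subdividing the pendant edge of $x$ (resp. $y$) and attaching $T'$ there yields $((x,z),y)$ (resp. $((y,z),x)$); each such grafting restricts correctly to $T$ and to $T'$, so different supertrees give different triples. If instead the shared taxon is one of the three, say $z\in X\cap X'$, then necessarily $x$ and $y$ lie in opposite blocks, and I use the lineage construction above but vary the interleaving order of the branchings carrying $x$ and $y$ along the $z$-path: placing $x$'s branch below $y$'s gives $((x,z),y)$ and the reverse order gives $((y,z),x)$. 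In every case two supertrees disagree, so by the dichotomy the cross triple is unresolved.

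The routine part is the tree surgery itself; the step that needs the most care is the bookkeeping that each constructed supertree indeed restricts to $T$ on $X$ and to $T'$ on $X'$ after leaf deletion and suppression of degree-two vertices, and checking that the case distinction on the location of $\{x,y,z\}$ relative to the single possible shared taxon is genuinely exhaustive (in particular that the excluded distributions are exactly those forbidden by the cross-triple condition). I expect the main subtlety to be handling the root conventions correctly, since the degree-two root must be preserved under restriction.
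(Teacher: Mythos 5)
The paper itself never proves this statement: it is imported verbatim as Lemma~4.1 of An\'e et al.\ \cite{ane_eulenstein_2009}, so your argument has to stand on its own. Your compatibility construction is sound, and your unresolvedness argument works both when $X\cap X'=\emptyset$ and when the shared taxon is one of the three. The genuine gap is the remaining subcase of your first case: $X\cap X'=\{w\}$ with $w\notin\{x,y,z\}$ (e.g.\ $X=\{w,x,y\}$, $X'=\{w,z,u\}$, cross triple $\{x,y,z\}$). There, ``grafting the whole of $T'$ onto an edge of $T$'' does not produce a supertree at all: $T'$ contains the leaf $w$, which already occurs in $T$, so the grafted object has two leaves labelled $w$ and is not a phylogenetic tree on $\lL(\sS)$; none of your three claimed supertrees exists in this subcase. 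This is not a peripheral configuration --- it is precisely the one the paper relies on when it applies the lemma: in Proposition~\ref{counterex} the sets $\{1,2,3\}$ and $\{1,4,5\}$ share taxon $1$, and the relevant cross triples such as $\{2,3,4\}$ avoid the shared taxon.

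Nor is the obvious repair sufficient, so this is a missing idea rather than a missing routine check. If you fall back on your lineage (stacking) construction along the $w$-path and vary the interleaving of the pendant subtrees $A_i$ of $T$ and $B_j$ of $T'$, you only obtain disagreement when $x$ and $y$ sit in \emph{different} pendant subtrees of the $w$-path of $T$. If $x$ and $y$ lie in the same pendant subtree $A_i$, then in every interleaved stacking the most recent common ancestor of $x$ and $y$ stays inside $A_i$ while $z$ attaches outside it, so every such supertree displays $((x,y),z)$ and no disagreeing pair is ever produced. To settle this subcase one must additionally graft the pendant subtree $B_l$ of $T'$ containing $z$ into the \emph{interior} of $A_i$ (say onto the pendant edge of $x$), choosing the interleaving so that $A_i$ occupies along the supertree's $w$-path exactly the position that $B_l$ has along the $w$-path of $T'$; one then checks that the restriction to $X$ still returns $T$, that the restriction to $X'$ still returns $T'$ (because the attachment point of $B_l$ projects to the correct slot between $B_{l-1}$ and $B_{l+1}$), and that the triple becomes $((x,z),y)$. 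Without this additional surgery your case analysis, although nominally exhaustive, fails in the one situation where both features of the hypothesis --- a shared taxon, and a cross triple disjoint from the overlap --- are active at the same time.
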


\begin{lem} \label{splitlem} Let $\sS$ be a set of taxon sets and let $\sigma=S_1|S_2$ be a split of $\sS$. If there is a resolved cross triple of $\sS$ with respect to $\sigma$, then 
\begin{enumerate}\item \begin{enumerate} \item there is a taxon $s_1 \in S_1$ such that  $s_1 \notin S_2$ and \item there is a taxon $s_2 \in S_2$ such that  $s_2 \notin S_1$ and \item there are two taxa $x,y \in \lL(\sS)$, $x\neq y$, such that $x,y$ are both in $\lL(S_1)\cap \lL(S_2)$.\end{enumerate}
\item If properties 1. (a)-(c) hold and additionally there are taxon sets $X_1 \in S_1$ and $X_2 \in S_2$ such that $s_1, x, y \in X_1$ and $s_2, x, y \in X_2$, then there is a resolved cross triple of $\sS$ with respect to $\sigma$.\end{enumerate}
\end{lem}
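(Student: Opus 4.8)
The plan is to treat the two implications separately: statement~1 (necessity of (a)--(c)) by contraposition using Lemma~\ref{ane}, and statement~2 (sufficiency) by constructing an explicit resolving topology assignment.

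For statement~1 I would first dispatch (a) and (b) together. Suppose (a) fails, so every taxon of $\lL(S_1)$ already lies in $\lL(S_2)$, i.e. $\lL(S_1)\subseteq\lL(S_2)$. Since $S_1\cup S_2=\sS$ we get $\lL(\sS)=\lL(S_1)\cup\lL(S_2)=\lL(S_2)$, so every triple of taxa is contained in $\lL(S_2)$ and hence \emph{no} cross triple of $\sS$ with respect to $\sigma$ exists at all, contradicting the hypothesis that one is resolved. Swapping the roles of $S_1$ and $S_2$ gives (b). For (c) I argue by contradiction too: if $\lL(S_1)$ and $\lL(S_2)$ shared at most one taxon, then viewing them as taxon sets $X:=\lL(S_1)$ and $X':=\lL(S_2)$ with $|X\cap X'|\le 1$, Lemma~\ref{ane} tells us that any trees on $X$ and $X'$ are compatible and that no cross triple of $X\cup X'$ with respect to $X|X'$ can be resolved. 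The bridge is that resolving a cross triple of $\sS$ with respect to $\sigma$ amounts to combining a supertree $R_1$ of the $S_1$-part of the assignment (a tree on $\lL(S_1)$) with a supertree $R_2$ of the $S_2$-part (a tree on $\lL(S_2)$): every supertree of the assignment displays both $R_1$ and $R_2$, and every supertree of $\{R_1,R_2\}$ displays the whole assignment. Since a cross triple of $\sS$ with respect to $\sigma$ is exactly a cross triple of $X\cup X'$ with respect to $X|X'$, Lemma~\ref{ane} produces, for any assignment, two supertrees disagreeing on it, so it is never resolved, which is the desired contradiction.

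For statement~2 I would exhibit a single resolving assignment and check that $\{s_1,s_2,y\}$ is a resolved cross triple. It is a cross triple with respect to $\sigma$: by (a) $s_1\notin\lL(S_2)$, so the three taxa do not all lie in $\lL(S_2)$, and by (b) $s_2\notin\lL(S_1)$, so they do not all lie in $\lL(S_1)$. To resolve it, choose any rooted binary tree $\T^*$ on $\lL(\sS)$ in which $y$ is an outgroup, so that $\T^*$ displays $ab|y$ for all $a,b\in\lL(\sS)\setminus\{y\}$, and let the assignment $\pP$ consist of the restrictions $\T^*|_X$ for $X\in\sS$; these are automatically compatible, being displayed by $\T^*$. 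In particular $\T^*|_{X_1}$ displays $xs_1|y$ (as $s_1,x,y\in X_1$) and $\T^*|_{X_2}$ displays $xs_2|y$ (as $s_2,x,y\in X_2$). Now let $T$ be any supertree of $\pP$. It displays $\T^*|_{X_1}$ and $\T^*|_{X_2}$, hence contains a cluster $C$ with $C\cap\{s_1,x,y\}=\{s_1,x\}$ and a cluster $C'$ with $C'\cap\{s_2,x,y\}=\{s_2,x\}$. Since the clusters of a tree are laminar and $x\in C\cap C'$, the clusters $C$ and $C'$ are nested; the larger one contains $\{x,s_1,s_2\}$ and excludes $y$, which forces $T$ to display $s_1s_2|y$. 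As this holds for every supertree of $\pP$, the cross triple $\{s_1,s_2,y\}$ is resolved.

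I expect the main obstacle to be the bridging argument in (c): one must argue carefully that the supertrees of an arbitrary topology assignment on $\sS$ with respect to the split $\sigma$ are precisely the trees obtained by combining a single tree on $\lL(S_1)$ with a single tree on $\lL(S_2)$, so that Lemma~\ref{ane} is genuinely applicable to the split $\lL(S_1)|\lL(S_2)$ of taxa rather than to the split $S_1|S_2$ of taxon sets. A secondary point requiring care is the laminarity step in statement~2, where one must confirm that the two clusters sharing $x$ are nested and that the resulting resolution $s_1s_2|y$ is forced in \emph{every} supertree of $\pP$, not merely in the chosen tree $\T^*$, since the definition of ``resolved'' quantifies over all supertrees of the assignment.
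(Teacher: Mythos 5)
Your proof is correct, and while part 1 follows essentially the paper's route, part 2 is a genuinely different construction. For part 1, properties (a) and (b) are exactly the content of Lemma \ref{crosstriples} (which the paper cites, whereas you reprove the contrapositive directly), and (c) is obtained, as in the paper, by reducing to Lemma \ref{ane} applied to the taxon sets $X=\lL(S_1)$ and $X'=\lL(S_2)$. For part 2, the paper assigns two caterpillar trees: $T_1$ on $X_1$ with cherry $\{x,y\}$ and $s_1$ on the adjacent 3-clade, and $T_2$ on $X_2$ with cherry $\{s_2,x\}$ and $y$ adjacent, chooses all remaining trees ``arbitrarily but in a way that leaves them compatible'', and argues that every supertree must display $(((s_2,x),y),s_1)$, so that $\{s_1,s_2,x\}$ is the resolved cross triple. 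You instead fix one global tree $\T^*$ on $\lL(\sS)$ with $y$ as outgroup, assign to \emph{every} set its restriction, and force $s_1s_2|y$ via laminarity of clusters, so $\{s_1,s_2,y\}$ is your resolved cross triple. Your version makes rigorous two points the paper leaves informal: compatibility of the whole assignment is automatic (all trees are restrictions of $\T^*$), and the quantification over \emph{all} supertrees is pinned down by the nesting argument rather than by an appeal to what supertrees ``take into account''. The paper's construction, conversely, forces a richer subtree on four taxa using only the two distinguished input trees.

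One caveat on your part 1(c): the bridging claim that ``every supertree of the assignment displays both $R_1$ and $R_2$'' is false for a fixed choice of supertrees $R_1$, $R_2$ of the two parts — a supertree of the assignment displays its own restrictions to $\lL(S_1)$ and $\lL(S_2)$, which need not coincide with the chosen $R_1$, $R_2$. Fortunately that direction is not needed: the true half of your bridge — every supertree of $\{R_1,R_2\}$ is a supertree of the assignment — already suffices, because by Lemma \ref{ane} the supertrees of $\{R_1,R_2\}$ disagree on every cross triple, and disagreement within a subfamily of the supertrees of the assignment already prevents any cross triple from being resolved.
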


\begin{proof} \begin{enumerate} \item By Lemma \ref{crosstriples}, all cross triples imply properties (1) and (2). Now we show that a resolved cross triple additionally implies (3). Assume there are no taxa $x,y$ with the properties specified in (3). This means that $S_1$ and $S_2$ share at most one taxon. Now we can choose any trees $T_1$ for $S_1$ and $T_2$ for $S_2$. As they overlap in at most one taxon, they are compatible by Lemma \ref{ane} and can be combined into a common supertree by pruning any edge of $T_1$ to any edge of $T_2$. This way, no cross triple can be resolved. So if a cross triple is resolved, (3) must hold.
\item
Now we show that (a), (b) and (c) together with the properties $s_1, x, y \in X_1 \subseteq S_1$ and $s_2, x, y \in X_2 \subseteq S_2$ imply that there is a resolved cross triple of $\sS$ with respect to $\sigma$. We prove this by construction of an informative topology assignment. We choose $T_1$ for $X_1$ and $T_2$ for $X_2$ such that they are so-called caterpillar trees as depicted by Figure \ref{caterpillar}.

 \begin{figure}[ht]      \centering\vspace{0.5cm} 
    \includegraphics[width=11cm]{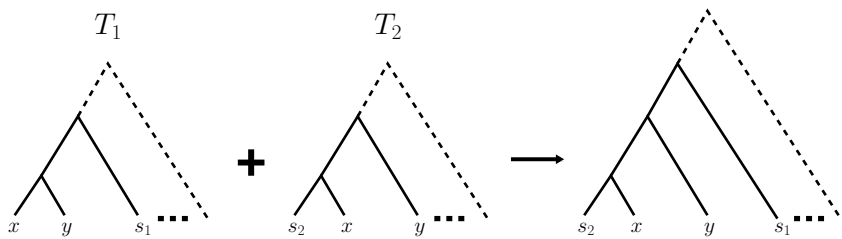} 
     \caption{Whenever the so-called caterpillar trees $T_1$ and $T_2$ are compatible and have the depicted taxon labellings, all supertrees combining these two trees will display the subtree $(((s_2,x), y),s_1)$ as shown on the right.  }\label{caterpillar}
  \end{figure}

In particular, we choose $T_1$ such that the taxa $x$ and $y$ specified by property (3) are together on the only 2-clade (so-called `cherry') and taxon $s_1$ (specified by properties (a) and (c)) is the taxon sharing the 3-clade with them. Moreover, we choose $T_2$ such that $s_2$ and $x$  (specified by properties (b) and (c)) are together on the 2-clade and taxon $y$ is the taxon sharing the 3-clade with them. The whole setting of $T_1$ and $T_2$ is shown in Figure \ref{caterpillar}. All other properties of $T_1$ and $T_2$ as well as all other trees for the other sets in $S_1$ and $S_2$ (if there are any) are chosen arbitrarily but in a way that leaves them compatible (i.e. if two sets share some taxa, we do not add conflicting information to the corresponding trees). By properties (a) and (b), the triple $\{s_1,s_2,x\}$, for instance, is a cross triple with respect to $\sigma$. Now every supertree comprising $T_1$ and $T_2$ (along with other trees if applicable) takes the fact into account that by $T_1$, $x$ and $y$ should be closer together than to any other taxon which is also included in $T_1$. Also, by $T_2$, $x$ and $s_2$ are closer to one another than any one of them is to $y$, but still closer to $y$ than to any other taxon of $T_2$. This implies the following subtree for each possible supertree: $(((s_2,x), y),s_1)$. Therefore, the cross triple $\{s_1,s_2,x\}$ is resolved as the only possible way to display this triple is $((s_2,x),s_1)$ as shown in Figure \ref{caterpillar}. Thus, we have constructed an informative topology assignment of $\sS$ with respect to $\sigma$. This completes the proof. \end{enumerate}\end{proof}

\noindent Next we introduce the definition of split groves and tripartition groves in order to simplify the definition of groves. 

\begin{defn}[Split grove] \label{splitsgrove} A {\it split grove} is a set $\sS$ of taxon sets such that for each possible split $\sigma$ of $\sS$ it holds either that
\begin{itemize}
\item there exists no cross triple of $\sS$ with respect to $\sigma$, or
\item there exists an informative topology assignment on $\sS$ w.r.t. $\sigma$. 
\end{itemize}
\end{defn}

\begin{rem} Obviously, every grove is also a split grove by definition. However, the opposite is not always true. This can be seen by looking at the set $\sS:= \{ \{x,y\},\{ y,z\}, \{ x,z\}\}$. None of the possible splits $\sigma_1:= \{x,y\}|\{ y,z\}, $ \\$ \{ x,z\}$, $\sigma_2:= \{x,y\},\{ y,z\}| \{ x,z\}$ or $\sigma_3:= \{ y,z\}|\{x,y\}, \{ x,z\}$ has any cross triples, as all taxa are displayed together on one side of the split, respectively. So by Definition \ref{splitsgrove}, $\sS$ is a split grove. However, the partition $\pi := \{x,y\}|\{ y,z\}| \{ x,z\}$ has a cross triple, namely $\{x,y,z\}$. This cross triple is not resolved by any tree combining the three 2-taxon sets (`cherries') as 2-taxon trees do not provide any information on the tree topology. Thus, $\sS$ is not a grove as there is a partition which has a cross triple, but no resolved one.
\end{rem}

As the above remark shows, the concept of split groves is not strong enough to cover the grove concept and to simplify it. Therefore, we next introduce tripartition groves and then show that they indeed are equivalent to groves. Recall that a tripartition is a partition of a set into three disjoint subsets.

\begin{defn}[Tripartition grove] \label{tripartitiongrove} A {\it tripartition grove} is a set $\sS$ of taxon sets such that for each possible split or tripartition $\tau$ of $\sS$ it holds either that
\begin{itemize}
\item there exists no cross triple of $\sS$ with respect to $\tau$, or
\item there exists an informative topology assignment on $\sS$ w.r.t. $\tau$. 
\end{itemize}
\end{defn} 
 
Again, all groves are tripartition groves by definition. Next we show that the converse is also true, which provides a significant simplification of the grove concept.

\begin{thm} \label{tripartitionthm} Every tripartition grove is a grove and every grove is a tripartition grove.
\end{thm}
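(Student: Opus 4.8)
The statement has two directions, and I would separate them at once. That every grove is a tripartition grove is immediate from the definitions: a split and a tripartition are special cases of a partition, so if the defining condition holds for \emph{all} partitions of $\sS$, it holds in particular for all splits and tripartitions. I would dispose of this inclusion in a single sentence and concentrate all effort on the converse.

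For the converse, suppose $\sS$ is a tripartition grove and let $\pi = S_1|\ldots|S_k$ be an arbitrary partition of $\sS$. If $\pi$ has no cross triple there is nothing to check, so assume $\{x,y,z\}$ is a cross triple with respect to $\pi$. The plan is to \emph{coarsen} $\pi$ to a split or tripartition $\tau$ that still has $\{x,y,z\}$ as a cross triple, invoke the tripartition-grove hypothesis on $\tau$, and then transfer the resulting informative topology assignment back to $\pi$.

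The heart of the argument is the construction of $\tau$. Since $\{x,y,z\}$ is a cross triple with respect to $\pi$, no part $S_i$ contains all three taxa in $\lL(S_i)$, so every part is missing at least one of them. I would group the parts according to a missing taxon: let $G_1$ collect the parts with $x\notin\lL(S_i)$; let $G_2$ collect the remaining parts with $y\notin\lL(S_i)$; and let $G_3$ collect the rest, namely those with $x,y\in\lL(S_i)$, which (being cross-triple-free) must satisfy $z\notin\lL(S_i)$. This assignment is exhaustive and disjoint, so $\tau=G_1|G_2|G_3$ is a coarsening of $\pi$; after discarding empty blocks at least two remain, since a one-block partition would equal $\sS$ and contain all taxa, contradicting the existence of a cross triple. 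Thus $\tau$ is genuinely a split or tripartition. By construction $x\notin\lL(G_1)$, $y\notin\lL(G_2)$ and $z\notin\lL(G_3)$, so no block of $\tau$ contains all of $x,y,z$; that is, $\{x,y,z\}$ is still a cross triple with respect to $\tau$.

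Finally I would transfer information across the coarsening. Because $\sS$ is a tripartition grove and $\tau$ possesses a cross triple, there is a topology assignment $\pP$ that is informative with respect to $\tau$, meaning $\pP$ resolves some cross triple $\{x',y',z'\}$ of $\sS$ with respect to $\tau$. Since each part of $\pi$ lies inside a single block of $\tau$, any triple contained in no block of $\tau$ is contained in no part of $\pi$; hence $\{x',y',z'\}$ is also a cross triple with respect to $\pi$. As the property of being resolved depends only on the topology assignment and not on the partition, $\pP$ resolves $\{x',y',z'\}$ with respect to $\pi$ as well, so $\pP$ is informative with respect to $\pi$. This verifies the grove condition for the arbitrary partition $\pi$ and finishes the proof. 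The only genuinely delicate point is the coarsening step: a naive merging of blocks can produce a block holding all three taxa and thereby destroy every cross triple, and the missing-taxon grouping above is exactly what rules this out.
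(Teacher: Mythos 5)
Your proof is correct and follows essentially the same strategy as the paper: coarsen the arbitrary partition $\pi$ into a split or tripartition $\tau$ in which each block misses one taxon of the given cross triple, invoke the tripartition-grove hypothesis on $\tau$, and transfer the resolved cross triple back to $\pi$ via the refinement relation. Your sequential grouping (parts missing $x$, then parts missing $y$, then the rest, which must miss $z$) differs only cosmetically from the paper's grouping by which pair of taxa each part contains, so the two arguments are the same in substance.
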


\begin{proof} As mentioned before, the second part directly follows by definition. We now prove the first part. Therefore, let $\sS$ be a tripartition grove and let $\pi=S_1|\ldots|S_k$ be a partition of $\sS$. If there is no cross triple of $\sS$ with respect to $\pi$, there is nothing to show. So now assume that there is a cross triple $\{x,y,z\}$ of $\sS$ with respect to $\pi$. We have to show that there is also a resolved cross triple. Let $\sS_{xy}:= \bigcup \limits_{\stackrel{i \in \{1,\ldots,k\}}{x \in S_i, \mbox{\hspace{0.05cm}}z \notin S_i}} S_i$, \mbox{\hspace{0.05cm}}
 $\sS_{xz}:= \bigcup \limits_{\stackrel{i \in \{1,\ldots,k\}}{z \in S_i, \mbox{\hspace{0.05cm}}y \notin S_i}} S_i$, \hspace{0.5cm}
  $\sS_{yz}:= \bigcup \limits_{\stackrel{i \in \{1,\ldots,k\}}{y \in S_i, \mbox{\hspace{0.05cm}}x \notin S_i}} S_i$,\hspace{0.5cm}
   $\hat{\sS}:= \bigcup \limits_{\stackrel{i \in \{1,\ldots,k\}}{x,y,z \notin S_i}} S_i.$
 Now let $\tau:=  \sS_{xy}| \sS_{xz}| \sS_{yz} \cup \hat{\sS}  $ be a tripartition seperating $\{x,y,z\}$ such that it is a cross triple with respect to $\tau$ by construction. This is possible as $\{x,y,z\}$ is a cross triple of $\sS$ with respect to $\pi$ and thus the three taxa do not appear together in any $S_i$. Note that it is possible, if $\pi$ is a 2-partition, i.e. a split, that one of the sets $\sS_{xy},\sS_{xz}, \sS_{yz}$ as well as the set $\hat{\sS}$ may be empty.
However, as $\sS$ is a tripartition grove, the existence of a cross triple with respect to $\tau$ implies the existence of a resolved cross triple $\{\hat{x},\hat{y},\hat{z}\}$ of $\sS$ with respect to $\tau$. This cross triple is also a cross triple with respect to $\pi$ (otherwise $\{\hat{x},\hat{y},\hat{z}\}$ would appear together in one $S_i$ and would therefore not be separated by $\tau$, either). So using the same topology assignment that resolves $\{\hat{x},\hat{y},\hat{z}\}$ for $\tau$, $\{\hat{x},\hat{y},\hat{z}\}$ is a resolved cross triple of $\sS$ with respect to $\pi$. This completes the proof.
\end{proof}

Theorem \ref{tripartitionthm} shows that the concepts of groves and tripartition groves are identical. This characterization of groves thus simplifies the search for groves already drastically, as it reduces the analysis from all possible partitions to splits and tripartitions. 

In the following section, we finally investigate Conjecture \ref{conjecture}.

\vspace{1cm}

\subsection{Examining unions of intersecting groves}
Our main goal in this section is to show that Conjecture \ref{conjecture} does not generally hold. Therefore, we first show that indeed the three properties of the conjecture are equivalent, which enables us to provide a counterexample to property (2) and thereby disprove property (1).

\begin{lem} \label{equi} The three properties stated by Conjecture \ref{conjecture} are equivalent.
\end{lem}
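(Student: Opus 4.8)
The plan is to prove the three implications $(1)\Rightarrow(3)\Rightarrow(2)\Rightarrow(1)$, since these are the most natural directions and together establish the full equivalence. The statements concern the combinatorics of maximal groves in an arbitrary set $\sS$ of taxon sets, so I would work purely at the level of set-theoretic inclusions among groves, using only the definitions of grove (Definition \ref{grove}) and maximal grove (Definition \ref{max}).

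First I would prove $(2)\Rightarrow(1)$, which I expect to be the heart of the argument. Assume that the union of any two intersecting groves is again a grove. I want to show that the maximal groves partition $\sS$. Every single taxon set $\{X\}$ (with $X\in\sS$) is trivially a grove, since a one-element set admits no partition with a cross triple; hence every element of $\sS$ lies in some maximal grove, so the maximal groves cover $\sS$. For disjointness, suppose two distinct maximal groves $\G_1$ and $\G_2$ intersect. By hypothesis $\G_1\cup\G_2$ is a grove, and it strictly contains $\G_1$ (since $\G_2\not\subseteq\G_1$, as the two maximal groves are distinct), contradicting the maximality of $\G_1$. Hence distinct maximal groves are disjoint, and together with the covering property this yields that the maximal groves form a partition of $\sS$.

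Next I would establish $(1)\Rightarrow(3)$ and $(3)\Rightarrow(2)$ to close the cycle. The implication $(1)\Rightarrow(3)$ is essentially immediate: if the maximal groves form a partition, then any two distinct maximal groves are disjoint by the definition of a partition, which is exactly property $(3)$. For $(3)\Rightarrow(2)$, I would argue contrapositively. Suppose $(2)$ fails, so there exist two groves $\G_1,\G_2$ with $\G_1\cap\G_2\neq\emptyset$ whose union $\G_1\cup\G_2$ is \emph{not} a grove. Extend $\G_1$ to a maximal grove $\M_1$ and $\G_2$ to a maximal grove $\M_2$. Then $\M_1$ and $\M_2$ both contain the nonempty set $\G_1\cap\G_2$, so they intersect; but they must be distinct, for if $\M_1=\M_2$ then this common maximal grove would contain $\G_1\cup\G_2$, and one would need to check that a subset of a grove is a grove in order to conclude that $\G_1\cup\G_2$ is itself a grove, contradicting our assumption. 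Thus $\M_1\neq\M_2$ are two intersecting maximal groves, so $(3)$ fails.

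The main obstacle I anticipate is the hereditary property used implicitly above, namely that every subset of a grove is again a grove. This is needed in $(3)\Rightarrow(2)$ to derive a contradiction from $\G_1\cup\G_2\subseteq\M_1=\M_2$, and it may require a short separate verification from the definition: given a partition of a sub-collection $\sS'\subseteq\sS$, one extends it to a partition of $\sS$ (for instance by adjoining the remaining taxon sets as singleton blocks) and checks that a cross triple and a resolving topology assignment for the larger partition restrict appropriately. If this heredity does not hold in the required form, I would instead route the argument $(3)\Rightarrow(2)$ through property $(1)$ or reformulate using only maximality, but I expect the subset-of-a-grove fact to be the one delicate point, while the remaining implications are routine manipulations of maximality and covering.
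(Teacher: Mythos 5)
Your cycle $(2)\Rightarrow(1)\Rightarrow(3)\Rightarrow(2)$ runs in the opposite direction to the paper's $(1)\Rightarrow(2)\Rightarrow(3)\Rightarrow(1)$, and the step you yourself single out as delicate is where the argument genuinely breaks: the implication $(3)\Rightarrow(2)$ rests on the hereditary claim that a subset of a grove is again a grove, and that claim is \emph{false}. Concretely, $\sS=\{\{1,2,3\},\{1,4,5\},\{1,2,3,4,5\}\}$ is a grove: in every partition of $\sS$, the part containing the taxon set $\{1,2,3,4,5\}$ contains all taxa of $\lL(\sS)$, so no partition admits any cross triple and the first bullet of Definition \ref{grove} applies vacuously. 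Yet its subset $\{\{1,2,3\},\{1,4,5\}\}$ is not a grove: the split $\{1,2,3\}\,|\,\{1,4,5\}$ has cross triples (e.g.\ $\{2,3,4\}$), and by Lemma \ref{ane} none of them can be resolved, since the two sets share only taxon $1$. (This is the same phenomenon the paper exploits when correcting Propositions 5.1 and 5.2 of An\'e et al.) The counterexample also shows exactly where your proposed verification fails: when you extend a partition of $\sS'\subseteq\sS$ by adjoining the remaining sets as singleton blocks, a cross triple of the sub-partition need \emph{not} remain a cross triple of the extended partition, because an adjoined block (here $\{1,2,3,4,5\}$) may contain all three taxa; so the grove property of $\sS$ tells you nothing about $\sS'$. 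Consequently, in your contrapositive argument you cannot rule out the case where $\G_1$ and $\G_2$ extend to the \emph{same} maximal grove, and in that case no pair of intersecting distinct maximal groves is produced, so $(3)$ is not violated and the implication is not established.

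The paper avoids ever descending from a grove to a subset: it proves $(3)\Rightarrow(1)$ directly (every single taxon set is a grove, greedy extension in the finite database $\D$ places it in some maximal grove, so maximal groves cover $\D$, and disjointness is exactly $(3)$), and it closes the cycle with $(1)\Rightarrow(2)$, where intersecting groves have intersecting maximal supergroves, contradicting the partition property. Your fallback remark -- ``route the argument through property $(1)$'' -- is in fact the paper's solution, but since you leave it undeveloped, the gap stands. Two further points deserve mention. First, even the paper's $(1)\Rightarrow(2)$ silently skips the case where the two maximal supergroves coincide, which is precisely the case your heredity assumption was meant to handle; so the difficulty you identified is real and not fully dispatched in the paper either. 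Second, in your $(2)\Rightarrow(1)$ you assert that distinct maximal groves cannot be nested and that a grove strictly containing a maximal grove contradicts Definition \ref{max}; under the literal Definition \ref{max} (no \emph{single} taxon set of $\D$ can be added) both assertions again tacitly use heredity. The paper's proof of $(2)\Rightarrow(3)$ makes the same move, so this is not a gap relative to the paper's own standard of rigor, but both arguments really treat ``maximal'' as ``not properly contained in any larger grove'', which is stronger than what Definition \ref{max} literally says.
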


\begin{proof} \mbox{}
 \begin{itemize}
\item We start by showing that (1) implies (2). Let $\sS_1, \sS_2 \subseteq \D$ be groves in a database of taxon sets $\D$ such that $\sS_1 \cap \sS_2 \neq \emptyset$. Assume $\sS_1 \cup \sS_2$ is not a grove. Let $\sS_1^{\mbox{\tiny max}}$,  $\sS_2^{\mbox{\tiny max}}$ be maximal supergroves of $\sS_1$, $\sS_2$, respectively. I.e. $\sS_1^{\mbox{\tiny max}}$,  $\sS_2^{\mbox{\tiny max}}$ are groves that contain $\sS_1$ or $\sS_2$, respectively, and by Definition \ref{max} they are such that no other taxon set of $\D$ can be added to them without destroying the grove property. Note that it is possible that $\sS_i = \sS_i^{\mbox{\tiny max}}$ for $i=1,2$. As by assumption $\sS_1 \cup \sS_2$ is not a grove, we have $\sS_i^{\mbox{\tiny max}} \neq \sS_1 \cup \sS_2$ for $i=1,2$.  Now as $\sS_1 \cap \sS_2 \neq \emptyset$, we have $\sS_1^{\mbox{\tiny max}} \cap \sS_2^{\mbox{\tiny max}}\neq \emptyset$. But because of property (1) of Conjecture \ref{conjecture}, the set of maximal groves of $\sS$ are a partition of $\sS$, which in particular means that they cannot intersect. So this is a contradiction and therefore the assumption is wrong. Thus, $\sS_1 \cup \sS_2$ is a grove.
\item Next we show that (2) implies (3). Let $\sS_1^{\mbox{\tiny max}}$,  $\sS_2^{\mbox{\tiny max}}$ be two maximal groves such that $\sS_1^{\mbox{\tiny max}} \neq \sS_2^{\mbox{\tiny max}}$. Assume $\sS_1^{\mbox{\tiny max}} \cap \sS_2^{\mbox{\tiny max}} \neq \emptyset$. 
Then by property (2), $\sS_1^{\mbox{\tiny max}} \cup \sS_2^{\mbox{\tiny max}}$ is a grove. But as $\sS_1^{\mbox{\tiny max}} \neq \sS_2^{\mbox{\tiny max}}$, there is at least one taxon set included in $\sS_1^{\mbox{\tiny max}} \cup \sS_2^{\mbox{\tiny max}}$ which is not present in $\sS_1^{\mbox{\tiny max}}$ or $\sS_2^{\mbox{\tiny max}}$, respectively. Therefore, $|\sS_1^{\mbox{\tiny max}} \cup \sS_2^{\mbox{\tiny max}}|> |\sS_1^{\mbox{\tiny max}}|$ and $|\sS_1^{\mbox{\tiny max}} \cup \sS_2^{\mbox{\tiny max}}|> |\sS_2^{\mbox{\tiny max}}|$. This contradicts the maximality of $\sS_1^{\mbox{\tiny max}}$, $\sS_2^{\mbox{\tiny max}}$. Thus, the assumption is wrong and $\sS_1^{\mbox{\tiny max}} \cap \sS_2^{\mbox{\tiny max}} = \emptyset$.
\item Last, we show that (3) implies (1). As (3) already states that maximal groves do not intersect, it only remains to show that each taxon set $X \in \D$ belongs to a maximal grove. All $X \in \D$ are groves by definition (as for single sets there is no partition). Now each such grove can either be combined with other taxon sets in $\D$ in order to form a bigger grove (which then can be maximized adding more taxon sets if possible), or no such combination is possible. In the latter case, $X$ itself is maximal by definition. So in both cases, $X$ belongs to a maximal grove. \qedhere
 \end{itemize}  \end{proof}
 
We are now in a position to show that Conjecture \ref{conjecture} does not hold. 

\begin{prop} \label{counterex} Conjecture \ref{conjecture} is not generally true. In particular, the union of intersecting groves is not necessarily a grove.
\end{prop}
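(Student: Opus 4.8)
The plan is to exploit the equivalence just established in Lemma \ref{equi}: since properties (1)--(3) of Conjecture \ref{conjecture} are equivalent, it suffices to refute property (2). Concretely, I would produce a database $\D$ together with two groves $\sS_1,\sS_2\subseteq\D$ that share at least one taxon set (so $\sS_1\cap\sS_2\neq\emptyset$) while their union $\sS_1\cup\sS_2$ fails to be a grove. By Theorem \ref{tripartitionthm} I only need to exhibit a single split or tripartition $\tau$ of $\sS_1\cup\sS_2$ that possesses a cross triple but for which no cross triple is resolved by any topology assignment; everything then reduces to checking one offending partition rather than all partitions.

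The design principle I would follow is to build the counterexample around the unresolvable cyclic configuration already isolated in the Remark following Definition \ref{splitsgrove}, namely three taxa $u,v,w$ occurring pairwise in different taxon sets (so that $u,v,w$ play the role of the three pairwise overlaps) with no further taxon common to the relevant sets. Such a triple cannot be forced: making, say, $uv\mid w$ hold in every supertree would require a ``bridge'' taxon $t$ lying in one set together with $\{u,w\}$ and in another together with $\{v,w\}$, and the configuration is chosen precisely so that no such $t$ exists for any of the three possible resolutions of $\{u,v,w\}$, the formal non-resolution being read off from Lemma \ref{ane} and Lemma \ref{splitlem}(1) exactly as in the Remark. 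The taxon sets carrying $u,v,w$ must, however, be enlarged and supplemented by additional ``connector'' sets so that $\sS_1$ and $\sS_2$ individually become groves: for each partition of $\sS_1$ (and of $\sS_2$) that has a cross triple I would use the connectors to guarantee a resolved cross triple via the caterpillar construction of Lemma \ref{splitlem}(2), i.e. by arranging that two shared taxa occur together with a private taxon inside single sets on each side of every relevant split.

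The step I expect to be the main obstacle is making these two requirements coexist. Turning $\sS_1$ and $\sS_2$ into groves pushes towards large, overlapping taxon sets and extra connectors, and every such enlargement threatens to create exactly the bridge taxon that would resolve a cross triple of the offending partition; keeping the union non-informative therefore pulls in the opposite direction, so the construction must thread the needle by letting all helpful overlaps act only within $\sS_1$ and within $\sS_2$ and never supply a common bridge across the offending partition of the union. The delicate part of the verification is then the necessary direction: that for the offending partition \emph{no} topology assignment resolves \emph{any} cross triple. The construction is arranged so that, although the two sides may share more than one taxon, these shared taxa never co-occur with a private taxon inside a single set on one side, so the caterpillar mechanism of Lemma \ref{splitlem}(2) is unavailable and, more importantly, no bridge taxon exists for any of the three resolutions of the triple; where a genuine tripartition is needed I would instead rule out resolution directly, taking an arbitrary compatible topology assignment and exhibiting two supertrees that disagree on the triple in question, the freedom coming from the fact that the taxon to be separated can be reattached on either side of the other two. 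Once a single such partition is in place, property (2) fails, and by Lemma \ref{equi} the whole of Conjecture \ref{conjecture} is disproved.
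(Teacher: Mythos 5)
There is a genuine gap: your proposal is a construction plan, not a proof. You never exhibit concrete taxon sets $\sS_1$, $\sS_2$, and the step you yourself flag as ``the main obstacle'' --- making grove-hood of $\sS_1$ and $\sS_2$ coexist with non-resolvability across the offending partition of the union --- is exactly the content of the proposition and is left unresolved. For a counterexample statement, sketching design principles and anticipated difficulties does not suffice; the example must be produced and verified. Moreover, the design you commit to makes the task strictly harder than necessary: you require $\sS_1$ and $\sS_2$ to earn their grove status through resolved cross triples (connectors plus the caterpillar mechanism of Lemma \ref{splitlem}(2)), which is in effect the stronger statement about \emph{informative} groves (the paper's Proposition \ref{counterex2}), and you build the obstruction in the union around the cyclic three-set configuration from the split-grove remark, which adds complexity without being needed.

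The key observation you missed is that Definition \ref{grove} permits \emph{vacuous} groves: if some taxon set in $\sS$ contains all taxa of $\lL(\sS)$, then no partition of $\sS$ has any cross triple, so $\sS$ is a grove with nothing to check. The paper exploits this: take $\sS_1:=\left\{\{1,2,3\},\{1\}\right\}$ and $\sS_2:=\left\{\{1,4,5\},\{1\}\right\}$. Each is a grove vacuously, and they intersect in the singleton $\{1\}$. Their union $\left\{\{1,2,3\},\{1\},\{1,4,5\}\right\}$ is not a grove: the partition separating $\{1,2,3\}$ from $\{1,4,5\}$ has cross triples (e.g.\ $\{2,3,4\}$), but since these two sets share only taxon $1$, Lemma \ref{ane} guarantees that no topology assignment resolves any of them. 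That single lemma does all the work --- no bridge-taxon analysis, no caterpillars, no exhibiting pairs of disagreeing supertrees. Your opening reductions (refute property (2) via Lemma \ref{equi}, and restrict attention to one offending split via Theorem \ref{tripartitionthm}) are sound, but the needle you propose to thread is threaded trivially once one notices that the grove definition does not require any cross triple to exist in $\sS_1$ or $\sS_2$ at all.
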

 
\begin{proof} We provide an explicit counterexample to property (2) of Conjecture \ref{conjecture}. Let $\sS_1:= \left\{ \left\{ 1,2,3Ê\right\}, \left\{ 1\right\} \right\}$ and $\sS_2:= \left\{ \left\{ 1,4,5Ê\right\}, \left\{ 1\right\} \right\}$. Then, $\sS_1$ is a grove as there is only one possible partition, namely $\pi := \left\{ 1,2,3Ê\right\} | \left\{ 1\right\}$, and for this partition there are no cross triples. By the same argument, $\sS_2$ is also a grove. Now $\sS_1 \cap \sS_2 =\left\{ \left\{ 1\right\}\right\} \neq \emptyset$. If Conjecture \ref{conjecture} was true, $\sS_1 \cup \sS_2 = \left\{ \left\{ 1,2,3Ê\right\}, \left\{ 1\right\},\right.$\\$\left.\left\{ 1,4,5Ê\right\} \right\}$ would be a grove. But by Lemma \ref{ane}, this cannot be the case, as the sets $\{1,2,3\}$ and $\{1,4,5\}$ only share one taxon and thus all cross triples between the two sets remain unresolved. Moreover, the set $\{1\}$ does not contribute any new information to either of the trees. Therefore, $\sS_1 \cup \sS_2$ is not a grove.
\end{proof}

In our proof, we explicitly use Lemma 4.1 of \cite{ane_eulenstein_2009} in order to show that the example provided is a counterexample to property (2) of Conjecture \ref{conjecture}, but of course one can also examine all possible topology assignments for the sets $\left\{ 1,2,3Ê\right\}$ and $\left\{ 1,4,5Ê\right\}$ (the singleton $\left\{ 1Ê\right\}$ does not contribute any information). The only possible cross triples of $\sS_1 \cup \sS_2$ (if there are any, which is only the case for partitions which split the sets $\left\{ 1,2,3Ê\right\}$ and $\left\{ 1,4,5Ê\right\}$ apart from each other) are $\{1,2,4 \}$,  $\{1,2,5 \}$, $\{1,3,5 \}$, $\{2,3,4 \}$, $\{2,3,5 \}$, $\{2,4,5 \}$ and $\{3,4,5 \}$. Now the intuitive reason why none of these cross triples is resolved when combining the topologies assigned to the sets $\left\{ 1,2,3Ê\right\}$ and $\left\{ 1,4,5Ê\right\}$ is that their only overlap is taxon $1$, which does not suffice to fix anything in the possible supertrees. 

\subsection{Some corrections concerning the paper by An\'e et al. \cite{ane_eulenstein_2009} }

It should be noted that $\sS_1:= \left\{ \left\{ 1,2,3Ê\right\}, \left\{ 1\right\} \right\}$ and $\sS_2:= \left\{ \left\{ 1,4,5Ê\right\}, \left\{ 1\right\} \right\}$ as presented in the proof of Proposition \ref{counterex} are indeed both groves, despite the fact that each of them consists only of two taxon sets which overlap with one another in just one taxon. This is in contrast to a statement in the context preceding Proposition 5.1 of \cite{ane_eulenstein_2009}, where the authors claim that the smallest example of a grove with one-taxon overlaps consists of four taxon sets. In fact, the authors make the same mistake in both Propositions 5.1 and 5.2. Therefore, we will now formally state these Propositions and correct them thereafter.

\begin{bfseries}Proposition 5.1 of \cite{ane_eulenstein_2009} \end {bfseries} Let $\sS$ be a set of three taxon sets such that no two elements of $\sS$ share more than one taxon.
\begin{enumerate}\item Then, no cross triple of $\sS$ with respect to any partition is resolved, and
\item therefore, $\sS$ is not a grove.
\end{enumerate}

The second statement in Proposition 5.1 of \cite{ane_eulenstein_2009} is wrong. In fact, the authors correctly prove the first statement but then conclude the second statement out of the first without taking into account that their definition of groves does not require the existence of any cross triples. The first part of the proposition, however, is correct and interesting; it is another formal statement proving that $\sS_1 \cup \sS_2$ as in the proof of Proposition \ref{counterex} cannot be a grove. This is due to the fact that this set can be partitioned such that it has cross triples, but as it consists of three taxon sets where no element shares more than one taxon with another element, no such cross triple can be resolved. 

Nonetheless, there are groves consisting of three taxon sets where no two elements share more than one taxon. For instance, $\sS:=\{\{1,2,3\}, \{1,2\}, $ \\$ \{2,3\}\}$ is a grove contradicting statement (2) of Proposition 5.1 of \cite{ane_eulenstein_2009}. This is due to the fact that one set in $\sS$ contains all taxa, so no partition contains any cross triples. 

As the authors use Proposition 5.1 of \cite{ane_eulenstein_2009} to prove the following proposition, it is not surprising that the same mistake occurs there, too. \par\vspace{0.5cm}
 
\begin{bfseries}Proposition 5.2 of \cite{ane_eulenstein_2009}\end{bfseries}Let $\sS$ be a set of four taxon sets such that no two elements of $\sS$ share more than one taxon. Then, $\sS$ is a grove if and only if 
\begin{enumerate}\item each taxon set shares a taxon with each other taxon set, and
\item each of the six overlaps involves a different taxon.
\end{enumerate}

The statement of Proposition 5.2 of \cite{ane_eulenstein_2009} can be proven wrong by considering the set $\sS:=\{\{1,2,3\}, \{1,2\},\{1,3\},\{1\}\}$. Again, as one taxon set in $\sS$ contains all taxa, no partition produces any cross triples, which makes $\sS$ a grove. But all six overlaps involve the same taxon, namely taxon $1$. However, the other direction of Proposition 5.2 of \cite{ane_eulenstein_2009} is true, as well as both directions are true for groves which do indeed have at least one cross triple with respect to a partition.

The mistake made in the above propositions is explicable considering the main purpose of the grove idea: only sets of taxon sets with the potential to deliver new information should be considered. As explained before, this idea is not really reflected in Definition \ref{grove}, the definition of groves as given by the authors. However, it will be shown in Section \ref{informativegroves} that simply excluding trivial cases (like e.g. groves containing a set of all taxa) from the definition does unfortunately not improve the situation with regards to Conjecture \ref{conjecture}.

\subsection{Informative groves and strictly informative groves} \label{informativegroves}
\par \vspace{0.5cm}
Now we modify the concept of groves in order to guarantee more informativeness. We have seen for instance that the example in the proof of Proposition \ref{counterex} leading to the problematic case only employs groves $\sS_1$ and $\sS_2$ which both contain one set which includes all their taxa, respectively. So any supertree of the sets in $\sS_1$ will always be identical to the tree assigned to this taxon set (and the same holds for $\sS_2$, respectively). Therefore, whilst both $\sS_1$ and $\sS_2$ fulfill the requirements of groves, they do not represent very interesting cases as no supertree of them can ever reveal new information. We now examine Conjecture \ref{conjecture} for more interesting cases, namely informative and strictly informative groves.

\begin{defn}$\mbox{}$
\begin{enumerate} 
\item A grove $\sS$ is called {\it informative}, if there exists a partition $\pi$ of $\sS$ such that there is an informative topology assignment on $\sS$ w.r.t. $\pi$.   \item A grove $\sS$ is called {\it strictly informative}, if for {\bf all} partitions $\pi$ of $\sS$ there is an informative topology assignment on $\sS$ w.r.t. $\pi$. \end{enumerate}
\end{defn}

We now state a result analogous to Proposition \ref{counterex} for informative groves.

\begin{prop} \label{counterex2} The union of intersecting informative groves is not necessarily a grove.
\end{prop}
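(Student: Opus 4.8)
The plan is to repeat the strategy behind Proposition \ref{counterex}, but to repair the defect that made the groves there non-informative, while preserving the one-taxon ``bridge'' that destroys the union. Concretely, I would take
\[
\sS_1 := \{\{1,2,3\}, \{2,3,4\}, \{1\}\}, \qquad \sS_2 := \{\{1,5,6\}, \{5,6,7\}, \{1\}\}.
\]
Each consists of a pair of taxon sets overlapping in two taxa, together with the singleton $\{1\}$, which will serve as the shared but informationally weak link between the two groves.

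First I would verify that $\sS_1$ and $\sS_2$ are groves; by Theorem \ref{tripartitionthm} it suffices to check every split and tripartition. For $\sS_1$ the two splits isolating $\{1,2,3\}$ or $\{1\}$ leave all four taxa on the opposite side and hence admit no cross triple at all. For the remaining split $\sigma_0 := \{2,3,4\} \bigm| \{1,2,3\},\{1\}$ and for the unique tripartition, cross triples do exist, and I would resolve one via Lemma \ref{splitlem}(2): taking $s_1 = 4$, $s_2 = 1$ and the shared pair $x = 2$, $y = 3$ (so $X_1 = \{2,3,4\}$ and $X_2 = \{1,2,3\}$), the caterpillar construction forces every supertree to display $((1,2),4)$, so the cross triple $\{1,2,4\}$ is resolved. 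Since $\sigma_0$ is a partition carrying a resolved cross triple, $\sS_1$ is moreover \emph{informative}. The same topology assignment resolves $\{1,2,4\}$ with respect to the tripartition as well, because $\{1,2,4\}$ is a cross triple there too and resolvedness is a property of the assignment, not of the partition; this disposes of the last partition, so $\sS_1$ is a grove. By symmetry $\sS_2$ is an informative grove, and clearly $\sS_1 \cap \sS_2 = \{\{1\}\} \neq \emptyset$.

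It then remains to show that the union $\sS_1 \cup \sS_2 = \{\{1,2,3\}, \{2,3,4\}, \{1\}, \{1,5,6\}, \{5,6,7\}\}$ is not a grove, for which I would exhibit the single bad split
\[
\sigma := \{\{1,2,3\}, \{2,3,4\}\} \bigm| \{\{1\}, \{1,5,6\}, \{5,6,7\}\},
\]
whose two sides $\hat S_1, \hat S_2$ have taxon sets $\lL(\hat S_1) = \{1,2,3,4\}$ and $\lL(\hat S_2) = \{1,5,6,7\}$, meeting only in the taxon $1$. The triple $\{2,3,5\}$ lies in neither side and is therefore a cross triple with respect to $\sigma$, so $\sigma$ does admit cross triples; but since $\lL(\hat S_1) \cap \lL(\hat S_2) = \{1\}$ contains fewer than two taxa, Lemma \ref{splitlem}(1)(c) forbids any resolved cross triple with respect to $\sigma$. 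Hence no topology assignment is informative with respect to $\sigma$, and the union fails the grove property.

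The delicate point, and the one I would check most carefully, is the tension that makes the statement nontrivial: informativeness of each $\sS_i$ forces a pair of taxon sets overlapping in at least two taxa (exactly the hypothesis Lemma \ref{splitlem} needs to resolve a cross triple), yet the union must still admit a split across which the two groves meet in at most one taxon. The construction resolves this tension by keeping the two ``rich'' overlaps, $\{2,3\}$ and $\{5,6\}$, strictly inside the respective groves, while the only taxon joining the two groves is the bridge taxon $1$. I would therefore confirm explicitly that $1$ is the sole common taxon of the two blocks, since this is precisely what guarantees both that each $\sS_i$ is an informative grove and that $\sigma$ is a cross-triple-carrying split with no resolved cross triple.
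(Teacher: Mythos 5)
Your proposal is correct and takes essentially the same approach as the paper: your counterexample is the paper's own example up to a relabelling of taxa (two informative groves, each consisting of two 3-taxon sets overlapping in two taxa plus a shared singleton bridge), and the union is defeated by the same split across which the two blocks share only the bridge taxon, your appeal to Lemma \ref{splitlem}(1)(c) being equivalent to the paper's direct use of Lemma \ref{ane}. Your verification of the grove and informativeness properties via Theorem \ref{tripartitionthm} and Lemma \ref{splitlem}(2) is somewhat more explicit than the paper's, but the underlying argument is identical.
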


\begin{proof} We provide an explicit counterexample. Let $\sS_1:= \left\{ \left\{ 1,2,3Ê\right\}, \left\{ 2,3,4Ê\right\}, \right. $ \\ $ \left. \left\{ 4\right\} \right\}$ and $\sS_2:= \left\{ \left\{ 4,5,6Ê\right\}, \left\{ 5,6,7Ê\right\}, \left\{ 4\right\} \right\}$. Then, $\sS_1$ is an informative grove as the only way to partition $\sS_1$ such that there are cross triples is to split $\left\{ 1,2,3Ê\right\}$ and  $\left\{ 2,3,4Ê\right\}$ apart. Then, the cross triples are $\left\{ 1,2,4Ê\right\}$ and $\left\{ 1,3,4Ê\right\}$, which are for instance resolved when the topology assignment $((1,2),3)$ and $((2,3),4)$ is chosen, as then the only possible supertree is $(((1,2),3),4)$. This is analogous to the example depicted by Figure \ref{resolved}. By the same argument, $\sS_2$ is an informative grove. Now $\sS_1 \cap \sS_2 =\left\{ \left\{ 4\right\}\right\} \neq \emptyset$. But $\sS_1 \cup \sS_2 = \left\{ \left\{ 1,2,3Ê\right\}, \left\{ 2,3,4Ê\right\}, \left\{ 4\right\},\left\{ 4,5,6Ê\right\}, \left\{ 5,6,7Ê\right\} \right\}$ is not a grove. As above, the intuitive reason is that for partiton $\pi := \left\{ 1,2,3Ê\right\}, \left\{ 2,3,4Ê\right\}, \left\{ 4\right\} | \left\{ 4,5,6Ê\right\},$ \\$  \left\{ 5,6,7Ê\right\}$, no matter which topology assignment is chosen, the fact that $\sS_1$ and $\sS_2$ overlap in just one taxon, namely taxon 4, is not enough to fix the supertrees. Basically, one can choose compatible trees for all taxon sets such that supertrees for $\sS_1$ and $\sS_2$ are computed, respectively. Then, by Lemma 4.1 of \cite{ane_eulenstein_2009}, as the taxa of $\sS_1$ and $\sS_2$ only intersect in one taxon, the two trees are compatible and can be combined in any possible way to form a common supertree, as the only condition is that both contain taxon 4. Therefore, no cross triple of this partition is resolved (but there are cross triples with respect to $\pi$, e.g. $\{1,2,5\}$). Thus, $\sS_1 \cup \sS_2$ is not a grove.
\end{proof}

It is worth noting that the union of informative groves does not only lose informativeness when the underlying groves intersect, but rather the grove property itself. We now show a weaker statement for strictly informative groves.

\begin{prop} $\mbox{}$  \label{nonstrictinf}The union of intersecting strictly informative groves is not necessarily strictly informative (even if it is a grove). 
\end{prop}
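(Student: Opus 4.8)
The plan is to disprove preservation of strict informativeness by an explicit example, in the spirit of Propositions \ref{counterex} and \ref{counterex2}, but tuned so that the union is genuinely a grove while still failing to be strictly informative. Concretely I would take
\[
\sS_1 := \left\{ \{1,2,3\}, \{1,2,4\} \right\}, \qquad \sS_2 := \left\{ \{1,2,4\}, \{2,4,5\} \right\},
\]
so that $\sS_1 \cap \sS_2 = \{\{1,2,4\}\} \neq \emptyset$ and $\sS_1 \cup \sS_2 = \{\{1,2,3\},\{1,2,4\},\{2,4,5\}\}$. The design principle is that each $\sS_i$ consists of two taxon sets overlapping in exactly two taxa, each carrying a taxon \emph{private} to it within $\sS_i$ (taxa $3$ and $4$ in $\sS_1$; taxa $1$ and $5$ in $\sS_2$); this forces a resolved cross triple on the unique split of each $\sS_i$ and makes each $\sS_i$ strictly informative. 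The crucial point is that taxon $4$, private to $\{1,2,4\}$ inside $\sS_1$, reappears in $\{2,4,5\}$, and taxon $1$, private to $\{1,2,4\}$ inside $\sS_2$, reappears in $\{1,2,3\}$; consequently, in the union the two sets $\{1,2,3\}$ and $\{2,4,5\}$ together cover all of $\lL(\sS_1 \cup \sS_2)=\{1,2,3,4,5\}$, which is exactly what will destroy strict informativeness.

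First I would verify that $\sS_1$ and $\sS_2$ are strictly informative groves. Each has a single partition (its defining split), which carries a cross triple since neither side is universal, and Lemma \ref{splitlem}(2) applies verbatim: for $\sS_1$ take $s_1 = 3$, $s_2 = 4$, shared pair $x=1,y=2$ with $X_1 = \{1,2,3\}$, $X_2 = \{1,2,4\}$; for $\sS_2$ take $s_1 = 5$, $s_2 = 1$, shared pair $x=2,y=4$ with $X_1 = \{2,4,5\}$, $X_2 = \{1,2,4\}$. In each case the caterpillar construction resolves a cross triple, so every partition of $\sS_i$ admits an informative topology assignment.

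Next I would show that $\sS_1 \cup \sS_2$ is a grove by running through its four partitions. For the split $\{1,2,4\} \mid \{1,2,3\},\{2,4,5\}$ the second block already has leaf set $\{1,2,3,4,5\}$, so it has no cross triple and the first alternative in the definition of a grove holds. For the two remaining splits, $\{1,2,3\}\mid\{1,2,4\},\{2,4,5\}$ and $\{2,4,5\}\mid\{1,2,3\},\{1,2,4\}$, I would again invoke Lemma \ref{splitlem}(2) (the required private taxa and doubly-shared pairs are present, with $\{1,2,4\}$ supplying $X_2$ in both cases), obtaining a resolved cross triple each. For the tripartition $\{1,2,3\}\mid\{1,2,4\}\mid\{2,4,5\}$ I would note that $\{1,3,4\}$ is a cross triple of it, and that it is already resolved by the topology assignment constructed for the split $\{1,2,3\}\mid\{1,2,4\},\{2,4,5\}$; since resolvability of a fixed triple is a property of the topology assignment and not of the partition, the same assignment witnesses an informative topology assignment for the tripartition, exactly as in the proof of Theorem \ref{tripartitionthm}. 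Finally, the split $\{1,2,4\}\mid\{1,2,3\},\{2,4,5\}$ shows directly that $\sS_1 \cup \sS_2$ is not strictly informative, since no informative topology assignment can exist with respect to a partition having no cross triple.

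The main obstacle is the tension built into the three requirements: strict informativeness of each $\sS_i$ forbids any proper sub-collection of $\sS_i$ from being universal (otherwise isolating it would yield a cross-triple-free split), whereas non-strict-informativeness of the union demands precisely such a universal block. The way through, which the example is engineered to realize, is to arrange that the taxa which are private within a single subgrove are exactly the taxa supplied by the other subgrove, so that a universal block materialises only after the union is formed. A secondary technical point is the tripartition: one must confirm that the cross triple found there is genuinely resolved, which is handled cleanly by transporting the resolving topology assignment from a coarser split rather than reconstructing it directly.
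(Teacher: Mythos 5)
Your proposal is correct and follows essentially the same strategy as the paper: two two-element strictly informative groves sharing a single taxon set, whose union admits a split (isolating the shared set) with no cross triples because the remaining sets jointly cover all taxa. The only differences are cosmetic — the paper uses the four-taxon example $\sS_1=\{\{1,2,3\},\{2,3,4\}\}$, $\sS_2=\{\{1,2,3\},\{1,3,4\}\}$, while you use five taxa and, going slightly beyond the paper's written proof, explicitly verify via Theorem \ref{tripartitionthm} that the union is indeed a grove.
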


\begin{proof} $\mbox{}$  We provide an explicit counterexample. Let $\sS_1:= \left\{ \left\{ 1,2,3Ê\right\}, \left\{ 2,3,4Ê\right\} \right\}$ and $\sS_2:= \left\{ \left\{ 1,2,3Ê\right\}, \left\{ 1,3,4Ê\right\} \right\}$. Then, $\sS_1$ is a strictly informative grove as the only way to partition $\sS_1$ is to split $\left\{ 1,2,3Ê\right\}$ and  $\left\{ 2,3,4Ê\right\}$ apart. The cross triples are $\left\{ 1,2,4Ê\right\}$ and $\left\{ 1,3,4Ê\right\}$, which are for instance resolved when the topology assignment $((1,2),3)$ and $((2,3),4)$ is chosen, as then the only possible supertree is $(((1,2),3),4)$. This is analogous to the situation depicted in Figure \ref{caterpillar}. By the same argument, $\sS_2$ is a strictly informative grove. Now $\sS_1 \cap \sS_2 =\left\{ \left\{ 1,2,3 \right\}\right\} \neq \emptyset$. But $\sS_1 \cup \sS_2 = \left\{ \left\{ 1,2,3Ê\right\}, \left\{ 2,3,4Ê\right\}, \left\{ 1,3,4\right\} \right\}$ is not strictly informative. This can be seen when examining the split $\left\{ 1,2,3Ê\right\}  | $\\$ \left\{ 2,3,4Ê\right\}, \left\{ 1,3,4\right\}$, which has no cross triples and therefore does not fulfill the strict informativeness criterion.
\end{proof}

While Proposition \ref{nonstrictinf} shows that even strictly informative groves do not fulfill the properties of Conjecture \ref{conjecture} (2), we conjecture that a scenario as the one in Proposition \ref{counterex2} is not possible for strictly informative groves.

\begin{conj} The union of two intersecting strictly informative groves is a grove. \end{conj}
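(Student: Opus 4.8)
The plan is to show that $\sS:=\sS_1\cup\sS_2$ is a grove by verifying the (a priori weaker, but by Theorem~\ref{tripartitionthm} equivalent) tripartition-grove condition. So I fix an arbitrary split or tripartition $\tau=\beta_1|\cdots|\beta_r$ with $r\in\{2,3\}$ of $\sS$ that admits a cross triple, and I must produce a resolved cross triple of $\sS$ with respect to $\tau$.

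The first step is a \emph{lifting lemma} that isolates the real difficulty: if $\sS'\subseteq\sS$ and some topology assignment on $\sS'$ resolves a triple $\{a,b,c\}\subseteq\lL(\sS')$, then this assignment extends to a topology assignment on all of $\sS$ that still resolves $\{a,b,c\}$. Indeed, a supertree of the assignment on $\sS'$ can be extended to a rooted binary tree on $\lL(\sS)$ by grafting the remaining taxa (which preserves every already-displayed subtree), and restricting this larger tree to each taxon set of $\sS\setminus\sS'$ produces compatible trees; every supertree of the resulting assignment on $\sS$ restricts on $\lL(\sS')$ to a supertree of the original assignment, hence displays the same unique topology on $\{a,b,c\}$. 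Thus resolvedness is inherited automatically, and the whole problem collapses to a purely combinatorial one: find $\sS'\in\{\sS_1,\sS_2\}$ and a triple $\{a,b,c\}$ that is a resolved cross triple of $\sS'$ with respect to the induced partition $\tau|_{\sS'}$ \emph{and} is still a cross triple of $\sS$ with respect to $\tau$.

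Next I would reduce to the case where at least one grove is genuinely separated by $\tau$. If both $\sS_1$ and $\sS_2$ lay inside a single block of $\tau$, then the common taxon set $X_0\in\sS_1\cap\sS_2$ would lie in both blocks, forcing them to coincide; then all of $\sS$ would sit in one block, contradicting the existence of a cross triple. So, up to symmetry, $\sS_1$ meets at least two blocks of $\tau$, whence $\tau|_{\sS_1}$ is a genuine partition of $\sS_1$. Since $\sS_1$ is strictly informative, $\tau|_{\sS_1}$ has a resolved cross triple $\{a,b,c\}$ of $\sS_1$, which by the lifting lemma is then resolved in $\sS$ as well; and Lemma~\ref{splitlem}(1), applied to suitable splits of $\sS_1$, lets me track which of $a,b,c$ are shared with $\sS_2$ and guarantees a private taxon on each side and two shared taxa across the separation.

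Only one thing can still go wrong, and this is the heart of the matter: although $\{a,b,c\}$ is a cross triple of $\sS_1$, it may fail to be a cross triple of $\sS$ because a block $\beta$ that omits one of $a,b,c$ \emph{within} $\sS_1$ may nevertheless contain the missing taxon once the taxon sets of $\sS_2$ lying in $\beta$ are adjoined. Such ``fill-in'' can be caused only by a taxon in the shared set $C:=\lL(\sS_1)\cap\lL(\sS_2)$. To rule it out I would use the full strength of strict informativeness -- namely that \emph{every} partition of $\sS_1$, not merely $\tau|_{\sS_1}$, yields a resolved cross triple -- so as to select, among the available resolved triples, one whose fillable taxon is separated by $\tau$ from the offending $\sS_2$-sets; failing that, I would run the symmetric argument inside $\sS_2$, or bridge the two critical blocks through $X_0$ to obtain two taxon sets sharing two taxa across $\tau$ and resolve directly by the caterpillar construction of Lemma~\ref{splitlem}(2). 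I expect this ``fill-in'' analysis -- especially the tripartition sub-case in which both $\sS_1$ and $\sS_2$ are spread across all three blocks and the shared taxa are distributed adversarially -- to be the main obstacle, and it is exactly here that strict informativeness must enter essentially, since mere informativeness already fails by Proposition~\ref{counterex2}.
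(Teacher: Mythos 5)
You should first be aware that the paper does \emph{not} prove this statement: it is posed as an open conjecture (immediately after Proposition~\ref{nonstrictinf}, which shows the union need not be \emph{strictly informative}), so there is no proof of the paper's to compare yours against, and your attempt must stand on its own. Several of its components do stand: the reduction to splits and tripartitions via Theorem~\ref{tripartitionthm} is legitimate; your lifting lemma is correct (a supertree of the sub-assignment on $\sS'$ extends to a binary tree on $\lL(\sS)$, its restrictions give compatible trees for the sets of $\sS\setminus\sS'$, and every supertree of the enlarged assignment restricts on $\lL(\sS')$ to a supertree of the original one, so a resolved triple with taxa in $\lL(\sS')$ stays resolved); and the observation that the common taxon set $X_0$ forces at least one of $\sS_1,\sS_2$ to meet two blocks of $\tau$ is sound, so strict informativeness does yield a resolved cross triple of that grove with respect to the induced partition.

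The genuine gap is that the proof stops exactly where the difficulty begins, and you say so yourself. Strict informativeness of $\sS_1$ guarantees only that $\tau|_{\sS_1}$ admits \emph{some} resolved cross triple; it gives no control whatsoever over \emph{which} triple is resolved. So when a block of $\tau$, enlarged by the taxon sets of $\sS_2$ it contains, swallows that triple (your ``fill-in''), nothing in your argument produces a replacement: the claim that one can ``select, among the available resolved triples, one whose fillable taxon is separated by $\tau$ from the offending $\sS_2$-sets'' is not established anywhere, and the configuration to be excluded is precisely the adversarial one in which \emph{every} resolved cross triple of $\sS_1$ and, symmetrically, of $\sS_2$ is swallowed by an enlarged block. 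The proposed fallback via the caterpillar construction is also unjustified: Lemma~\ref{splitlem}(1) yields private taxa $s_1,s_2$ and two shared taxa $x,y$, but not the hypothesis of Lemma~\ref{splitlem}(2) that these be concentrated in single taxon sets $X_1\in S_1$ and $X_2\in S_2$ with $s_1,x,y\in X_1$ and $s_2,x,y\in X_2$, and without that concentration the construction cannot be run. What you have written is therefore a plan that isolates the crux of the conjecture, not a proof of it; closing the fill-in case is exactly the open problem the paper leaves behind.
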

\vspace{0.5cm}

\subsection{2-overlap groves}

As the previous sections show, the grove concept introduced by An\'e et al. \cite{ane_eulenstein_2009} does not fulfill the properties specified in Conjecture \ref{conjecture}, even if it is interpreted in a stricter way as in Section \ref{informativegroves}. As explained in Section \ref{introduction}, Conjecture \ref{conjecture} (1) is of importantance concerning the search for groves in databases.
We now introduce an alternative grove definition, which fulfills the conjecture and thus simpflifies the search for groves. In order to do so, we first need to define 2-overlap graphs.

\begin{defn} Let $\sS$ be a set of taxon sets $X_1,\ldots, X_k$. Its {\em 2-overlap graph is the graph} $G:=(\sS, E)$, where an edge $\{X_i,X_j\} \in E$ if and only if $X_i$ and $X_j$ share at least 2 taxa, i.e.  $|X_i \cap X_j| \geq 2$. 
\end{defn}

\begin{defn}[2-overlap grove] \label{2overlap}Let $\sS$ be a set of taxon sets. Then, $\sS$ is called a {\em 2-overlap grove} if and only if its 2-overlap graph is connected. \end{defn}

\begin{ex}Figure \ref{2overlapgraph} shows an example $\sS$ where the 2-overlap graph is connected which makes $\sS$ a 2-overlap grove, as well as an example $\hat{\sS}$, whose 2-overlap graph consists of two connected components such that $\hat{\sS}$ is not a 2-overlap grove.
\end{ex}

We now show that 2-overlap groves are indeed groves.

\begin{thm}[Theorem 1.1 of \cite{ane_eulenstein_2009}] Every set of taxon sets whose 2-overlap graph is connected is a grove.\label{2overgrove1}
\end{thm}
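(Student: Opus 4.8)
The plan is to reduce the statement to a manageable family of partitions and then, for each such partition that carries a cross triple, to exhibit a topology assignment that resolves one of them. First I would invoke Theorem~\ref{tripartitionthm}: since groves and tripartition groves coincide, it suffices to show that a set $\sS$ with connected 2-overlap graph is a tripartition grove, i.e. to verify the grove condition only for splits $\sigma = S_1|S_2$ and tripartitions $\tau = S_1|S_2|S_3$. Fixing such a partition and assuming it has a cross triple, Lemma~\ref{crosstriples} tells us each relevant block carries a private taxon; for a split there are $s_1 \in \lL(S_1)\setminus\lL(S_2)$ and $s_2\in\lL(S_2)\setminus\lL(S_1)$. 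The aim is then to produce a resolved cross triple, which by Lemma~\ref{splitlem} amounts to building a suitable caterpillar-based topology assignment.

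The naive attempt is to feed a single crossing edge into Lemma~\ref{splitlem}(2): connectivity guarantees two taxon sets $X_1\in S_1$ and $X_2\in S_2$ with $|X_1\cap X_2|\ge 2$, supplying the two shared taxa. However --- and this is the crucial point --- Lemma~\ref{splitlem}(2) also requires $X_1$ to contain a taxon private to $S_1$ and $X_2$ one private to $S_2$, and a single crossing edge need not have this property: the sets realizing the overlap across the split can be ``pure overlap'' sets all of whose taxa are shared. For instance, for $\sS = \{\{a,1,2\},\{1,2,3,4\},\{1,3,4\},\{b,2,3,4\}\}$ and the split $\{\{a,1,2\},\{1,2,3,4\}\}\,|\,\{\{1,3,4\},\{b,2,3,4\}\}$, the 2-overlap graph is connected and $\{a,b,1\}$ is a cross triple, yet every crossing edge emanates from the pure-overlap set $\{1,2,3,4\}$, so Lemma~\ref{splitlem}(2) cannot be applied directly. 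Hence connectivity must be exploited globally rather than through one edge.

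My approach would therefore be to choose a path in the 2-overlap graph joining a set containing $s_1$ to a set containing $s_2$, say $X_1 = D_0, D_1,\dots, D_m = X_2$ with $|D_{i}\cap D_{i+1}|\ge 2$, and to construct a single global assignment realizing a common caterpillar ``spine'' along this path: I would fix a linear order on $\lL(\sS)$ placing $s_1$ at the innermost (cherry) end and $s_2$ at the outermost end, thread the shared taxa along the spine, and assign to each taxon set the caterpillar induced by this order (these are automatically compatible, being restrictions of one tree). The claim to establish is that the overlaps of size $\ge 2$ transmit the ordering across each bridging set, so that in every supertree $s_1$ is more deeply nested than, and $s_2$ lies strictly outside, some spine taxon $t$; this forces the topology $((s_1,t),s_2)$ and resolves the cross triple $\{s_1,s_2,t\}$. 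This is precisely a chained generalization of the single-overlap caterpillar argument used in the proof of Lemma~\ref{splitlem}(2).

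The hard part will be exactly this forcing argument: one must show that a chain of pairwise overlaps of size $\ge 2$ rigidly propagates the relative position of the spine taxa, even though no single input tree contains all three taxa of the target triple, and even when the bridging sets contribute no private taxa and the overlaps are distributed across several sets. Establishing that every compatible supertree agrees on $\{s_1,s_2,t\}$ (in the spirit of \cite[Prop.~9.1]{bryant_steel_1995}) is the technical heart of the proof. A secondary obstacle is transferring the construction from splits to tripartitions; here connectivity is again essential, since it rules out the degenerate configurations --- such as $\{\{x,y\},\{y,z\},\{x,z\}\}$, whose 2-overlap graph is edgeless --- on which a tripartition could otherwise carry an unresolvable cross triple.
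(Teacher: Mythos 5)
First, a remark on the comparison itself: the paper contains no proof of this statement. It is imported verbatim as Theorem~1.1 of \cite{ane_eulenstein_2009} and is used only to deduce Corollary~\ref{2overgrove}; none of the paper's own machinery (Lemma~\ref{crosstriples}, Lemma~\ref{splitlem}, Theorem~\ref{tripartitionthm}) is claimed to suffice for it. So you are not deviating from the paper's proof --- you are attempting to reconstruct the proof that lives in An\'e et al. That said, your setup is sound as far as it goes: the reduction to splits and tripartitions via Theorem~\ref{tripartitionthm} is legitimate, and your observation that a single crossing edge of the 2-overlap graph cannot simply be fed into Lemma~\ref{splitlem}(2) --- because the two sets realizing the crossing overlap may be ``pure overlap'' sets containing no taxon private to their side --- is correct; your four-set example does exhibit exactly this.

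The genuine gap is that the proof stops precisely where the theorem begins. The assertion that a chained caterpillar assignment along a path $D_0,\dots,D_m$ forces every supertree to display $((s_1,t),s_2)$ \emph{is} the content of the theorem, and you explicitly defer it (``the claim to establish'', ``the technical heart''). Moreover, this claim is not a routine verification, because it depends delicately on the choice of the global order, which you leave at the level of ``thread the shared taxa along the spine''. Take $D_0=\{s_1,c,d\}$, $D_1=\{a,b,c,d\}$, $D_2=\{a,b,s_2\}$: the threaded order $s_1<c<d<a<b<s_2$ does force a unique supertree, but the order $s_1<a<b<c<d<s_2$ (equally consistent with ``$s_1$ innermost, $s_2$ outermost'') produces caterpillars $((s_1,c),d)$, $(((a,b),c),d)$, $((a,b),s_2)$, which admit one supertree displaying $((s_1,c),s_2)$ and another displaying $((c,s_2),s_1)$, since $s_1$ and $s_2$ may both be grafted onto the pendant edge of $c$ in either order; no triple of the form $\{s_1,t,s_2\}$ is forced there (a cross triple such as $\{s_1,a,d\}$ happens to be forced, but it is not of the shape your argument produces). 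So one must (i) define the order precisely in the general situation where taxa recur in non-consecutive path sets and overlap pairs interleave or coincide, and (ii) prove the propagation, e.g.\ by induction along the path using rooted-triple closure rules in the spirit of \cite{bryant_steel_1995}. Finally, the tripartition case hides a second unaddressed defect: for a tripartition $S_1|S_2|S_3$, the taxa $s_1,s_2$ supplied by Lemma~\ref{crosstriples} may both lie in $\lL(S_3)$, so $\{s_1,s_2,t\}$ is a cross triple only if $t\notin\lL(S_3)$; your construction gives no control over which spine taxon $t$ gets forced. Until these points are settled, the proposal is a plan for a proof, not a proof.
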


\begin{cor} Every 2-overlap grove is a grove.\label{2overgrove}
\end{cor}

The proof of Corollary \ref{2overgrove} is a direct conclusion of Theorem \ref{2overgrove1}. So all 2-overlap groves are groves, but -- as can be seen for instance in Proposition \ref{counterex} -- not all groves are 2-overlap groves. There, we used the groves $\sS_1:= \left\{ \left\{ 1,2,3Ê\right\}, \left\{ 1\right\} \right\}$ and $\sS_2:= \left\{ \left\{ 1,4,5Ê\right\}, \left\{ 1\right\} \right\}$, whose 2-overlap graphs are both not connected, to show that the union of intersecting groves need not be a grove. Next we show that such issues naturally cannot arise with 2-overlap groves.

\begin{thm} If two 2-overlap groves intersect, their union is also a 2-overlap grove.
\end{thm}

\begin{proof} Let $\sS_1$, $\sS_2$ be two 2-overlap groves such that $\sS_1 \cap \sS_2 \neq \emptyset$.Then, there is a taxon set $S \in \sS_1 \cap \sS_2$. Now, as the 2-overlap graphs of $\sS_1$ and $\sS_2$ are both connected, both are in particular connected to $S$. Thus, $S$ connects the two 2-overlap graphs and therefore the 2-overlap graph of $\sS:=\sS_1Ê\cup \sS_2$ is connected. By Definition \ref{2overlap}, this implies that $\sS$ is a 2-overlap grove. This completes the proof. \end{proof}

As Lemma \ref{equi} naturally also applies to 2-overlap groves, it turns out that 2-overlap groves have all required properties stated by Conjecture \ref{conjecture} (not just the second one), and thus resolve all problems inherent to the grove concept of Definition \ref{grove}. And as all 2-overlap groves are also groves, they also have the potential to construct informative supertrees. 

\begin{rem} Note that while all 2-overlap groves are also groves, they are not directly related to strictly informative groves, as neither property implies the other. We show this with the following two examples.
\begin{enumerate}
\item The set of taxon sets $\sS= \{\{1,2\},\{1,2,3\}\}$ is a 2-overlap grove, because its 2-overlap graph is connected as shown in Figure \ref{2overlapgraph}. However, as the only possible partition, namely the split $\{1,2\}|\{1,2,3\}$, does not have any cross triples and thus also no resolved ones, $\sS$ is a grove, but not a strictly informative one. So it has to be noted that the 2-overlap grove concept still includes some groves which contain one set with all taxa, in which case a supertree cannot provide new information.
\item The set of taxon sets $\hat{\sS}= \{\{2,3,6\},\{1,4,5\},\{2,4,5,7\}\}$ is a not a 2-overlap grove, because its 2-overlap graph is not connected as shown in Figure \ref{2overlapgraph}. However, choosing the topology assignment $\pP:=$\\$\{((2,3),6), ((1,4),5), (((4,5),2),7)\} $ leads to supertrees in which the position of the taxa $3$ and $6$ varies for each tree, but all contain the subtree $T:=((((1,4),5),2),7)$ (the so-called `maximum agreement subtree') as shown in Figure \ref{SIG}. Moreover, the triple $\{1,3,7\}$ is a cross triple for all possible partitions of $\hat{\sS}$, and this cross triple is resolved by the maximum agreement subtree. Thus, this topology assignment resolves a cross triple for all possible partitions. Therefore, $\hat{\sS}$ is a strictly informative grove.
\end{enumerate}
\end{rem}

  \begin{figure}[ht]      \centering\vspace{0.5cm} 
    \includegraphics[width=11cm]{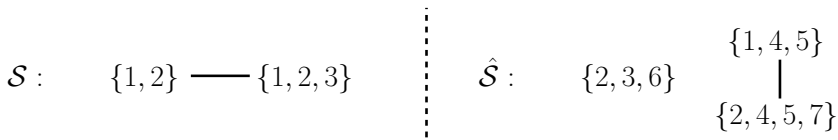} 
     \caption{The set $\sS= \{\{1,2\},\{1,2,3\}\}$ is a 2-overlap grove but not a strictly informative grove. For the set $\hat{\sS}$, the situation is vice versa.  }
   \label{2overlapgraph}
  \end{figure}
  
 \begin{figure}[ht]      \centering\vspace{0.5cm} 
    \includegraphics[width=11cm]{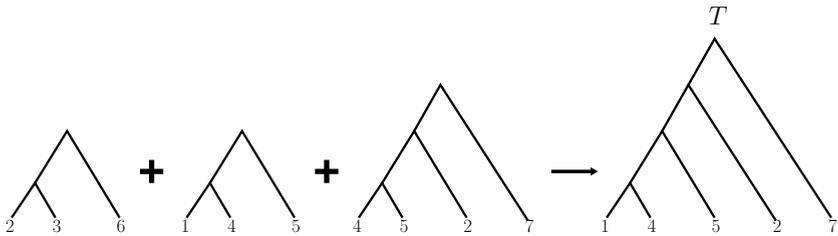} 
    \caption{The topology assignment $\pP:=\{((2,3),6), ((1,4),5), (((4,5),2),7)\} $ leads to various  supertrees, as the position of the taxa $3$ and $6$ varies. However, all supertrees contain the maximum agreement subtree $T:=((((1,4),5),2),7)$, so that in any case the triple $\{1,3,7\}$ gets resolved. }\label{SIG}
  \end{figure}

\section{Discussion}
In this paper, we showed that the concept of groves as introduced by An\'e et al. \cite{ane_eulenstein_2009} can be simplified to the much simpler concept of tripartition groves. This reduces the search for groves in databases drastically from investigating all partitions to analyzing only splits and tripartitions only, which is a great improvement. But we also showed that groves as introduced by An\'e et al. unfortunately do not fulfill the requirements of Conjecture \ref{conjecture}, which implies that they are hard to find in databases of taxon sets even when our simplification is considered. We also investigated slightly modified versions of the original grove definition in order to enforce more informativeness, but even then there were similar drawbacks in the grove concept.
Finally, we successfully proved that our new concept of 2-overlap groves fulfills all properties of the conjecture, while these 2-overlap groves are also groves in the original sense of the definition. So they comprise the good properties of groves, namely for instance offering the potential of informative supertrees, with a property that makes them more easily identifiable in databases. It should be noted that the latter is not only due to the fact that Conjecture \ref{conjecture} (2) holds for 2-overlap groves, but also to the fact that the construction of the 2-overlap graph of a database can be done more efficiently and more easily than investigating all cross triples of all possible splits and tripartitions. Moreover, the 2-overlap graph of a database immediately displays all maximal 2-overlap groves in the database and thus shows quickly which taxon sets should be combined in order to find new information in the corresponding phylogenetic supertrees.
It should be noted, though, that our concept of 2-overlap groves only covers a subset of the groves by An\'e et al. \cite{ane_eulenstein_2009}, as all 2-overlap groves are groves but the opposite does not hold. Therefore, some taxon sets which only overlap in a single taxon are not covered by our concept, whereas they may be considered as groves. However, we claim that these examples are rather artificial in the sense that in practice, biologists rarely wish to combine trees with only one taxon in common. So the simplification gained by restricting ourselves to 2-overlaps seems to outweigh the possible drawbacks by considering fewer cases. Altogether, we think this new concept is promising and more research should be done. For instance, practical analyses in real databases would be really helpful.


\subsection*{Acknowledgment}
I wish to thank Mike Steel, Arndt von Haeseler, Oliver Eulenstein and Anne Kupczok for helpful discussions. I also wish to thank Clemens Zarzer for his fruitful suggestions on previous versions of the manuscript as well as his programming support. Financial support from the Wiener Wissenschafts-, Forschungs- und Technologiefonds (WWTF) to Arndt von Haeseler is greatly appreciated.

\end{document}